\def\ttt{\texttt}
\def\extraspacing{\vspace{3mm} \noindent}
\def\figcapup{\vspace{-1mm}}
\def\figcapdown{\vspace{-0mm}}
\def\vgap{\vspace{1mm}}
\newcommand{\bm}[1]{\textrm{\boldmath${#1}$}}
\newcommand{\myeqn}[1]{\begin{eqnarray}#1\end{eqnarray}}
\newcommand{\myset}[1]{\{#1\}}
\newcommand{\set}[1]{\{#1\}}
\newcommand{\explain}[1]{(\textrm{#1})}
\def\mit{\mathit}
\def\eps{\epsilon}
\def\fr{\frac}
\def\-{\mbox{-}}
\def\tO{\tilde{O}}
\def\lc{\lceil}
\def\lf{\lfloor}
\def\rc{\rceil}
\def\rf{\rfloor}
\def\nn{\nonumber}
\def\*{\star}
\def\done{\qed \vspace{5mm}}	
\def\figcapup{\vspace{0mm}}
\def\figcapdown{\vspace{0mm}}
\def\extraspacing{\vspace{5mm} \noindent}
\def\vgap{\vspace{3mm}}
\def\A{\mathcal{A}}
\def\E{\mathcal{E}}
\def\G{\mathcal{G}}
\def\H{\mathcal{H}}
\def\I{\mathcal{I}}
\def\J{\mathcal{J}}
\def\L{\mathcal{L}}
\def\Q{\mathcal{Q}}
\def\U{\mathcal{U}}
\def\V{\mathcal{V}}
\def\Z{\mathcal{Z}}
\def\tO{\tilde{O}}
\def\attset{\mathit{attset}}
\def\att{\textrm{\bf att}}
\def\config{\mathit{config}}
\def\cp{\mathit{Join}}
\def\dom{\textrm{\bf dom}}
\def\iso{\mathit{isolated}}
\def\join{\mathit{Join}}
\def\scheme{\mathit{scheme}}
\def\share{\mathit{share}}
\begin{document}

\title{A Near-Optimal Parallel Algorithm\texorpdfstring{\\}{ }for Joining Binary Relations}

\author[B.~Ketsman]{Bas Ketsman}
\address{Vrije Universiteit Brussel}	
\email{\texttt{bas.ketsman@vub.be}} 

\author[D.~Suciu]{Dan Suciu}
\address{University of Washington}	
\email{\texttt{suciu@cs.washington.edu}}

\author[Y.~Tao]{Yufei Tao}
\address{Chinese University of Hong Kong, Hong Kong}	
\email{\texttt{taoyf@cse.cuhk.edu.hk}}
\thanks{The research of Bas Ketsman was partially supported by FWO-grant G062721N. The research of Dan Suciu was partially supported by projects NSF IIS 1907997 and NSF-BSF 2109922. The research of Yufei Tao was partially supported by GRF projects 142034/21 and 142078/20 from HKRGC, and an AIR project from the Alibaba group.}

\keywords{Joins, Conjunctive Queries, Parallel Computing, Database Theory}

\renewcommand*\L{\mathcal{L}}
\begin{abstract}
    We present a constant-round algorithm in the massively parallel computation (MPC) model for evaluating a natural join where every input relation has two attributes. Our algorithm achieves a load of $\tO(m/p^{1/\rho})$ where $m$ is the total size of the input relations, $p$ is the number of machines, $\rho$ is the join's fractional edge covering number, and $\tO(.)$ hides a polylogarithmic factor. The load matches a known lower bound up to a polylogarithmic factor. At the core of the proposed algorithm is a new theorem (which we name {\em the isolated cartesian product theorem}) that provides fresh insight into the problem's mathematical structure. Our result implies that the {\em subgraph enumeration problem}, where the goal is to report all the occurrences of a constant-sized subgraph pattern, can be settled optimally (up to a polylogarithmic factor) in the MPC model.
\end{abstract}

\maketitle

\section{Introduction} \label{sec&intro} 

Understanding the hardness of joins has been a central topic in database theory. Traditional efforts have focused on discovering fast algorithms for processing joins in the {\em random access machine} (RAM) model (see \cite{agm13,ahv95,nprr18,nrr13,ps14,v14,y81} and the references therein). Nowadays, massively parallel systems such as Hadoop \cite{dg04} and Spark \cite{aba+09} have become the mainstream architecture for analytical tasks on gigantic volumes of data. Direct adaptations of RAM algorithms, which are designed to reduce CPU time, rarely give satisfactory performance on that architecture. In systems like Hadoop and Spark, it is crucial to minimize communication across the participating machines because usually the overhead of message exchanging overwhelms the CPU calculation cost. This has motivated a line of research --- which includes this work --- that aims to understand the communication complexities of join problems. 

\subsection{Problem Definition} \label{sec&intro-prob}

We will first give a formal definition of the join operation studied in this paper and then elaborate on the computation model assumed.

\extraspacing {\bf Joins.} Let $\att$ be a finite set where each element is called an {\em attribute}, and $\dom$ be a countably infinite set where each element is called a {\em value}. A {\em tuple} over a set $U \subseteq \att$ is a function $\bm{u}: U \rightarrow \dom$. Given a subset $V$ of $U$, define $\bm{u}[V]$ as the tuple $\bm{v}$ over $V$ such that $\bm{v}(X) = \bm{u}(X)$ for every $X \in V$. We say that $\bm{u}[V]$ is the {\em projection} of $\bm{u}$ on $V$.

\vgap

A {\em relation} is a set $R$ of tuples over the same set $U$ of attributes. We say that the {\em scheme} of $R$ is $U$, and write this fact as $\scheme(R) = U$. $R$ is {\em unary} or {\em binary} if $|\scheme(R)| = 1$ or $2$, respectively. A value $x \in \dom$ {\em appears} in $R$ if there exist a tuple $\bm{u} \in R$ and an attribute $X \in U$ such that $\bm{u}(X) = x$; we will also use the expression that $x$ is ``a value on the attribute $X$ in $R$''.

\vgap 

A {\em join query} (sometimes abbreviated as a ``join'' or a ``query'') is a set $\Q$ of relations. Define  
$\attset(\Q) = \bigcup_{R \in \Q} \scheme(R)$.
The {\em result} of the query, denoted as $\join(\Q)$, is the following relation over $\attset(\Q)$
\myeqn{
    \Big\{ 
    \textrm{tuple $\bm{u}$ over $\attset(\Q)$} \bigm| \forall R \in \Q, \, \textrm{$\bm{u}[\scheme(R)] \in R$} 
    \Big\}. 
    \nn 
}
$\Q$ is 
\begin{itemize} 
    \item {\em simple} if no distinct $R, S \in \Q$ satisfy $\scheme(R) = \scheme(S)$;  
    \item {\em binary} if every $R \in \Q$ is binary. 
\end{itemize}
Our objective is to design algorithms for answering simple binary queries. 

\vgap

The integer 
\myeqn{
    m=
    \sum_{R \in \Q} |R|
    \label{eqn&inputsize}
}
is the {\em input size} of $\Q$. Concentrating on {\em data complexity}, we will assume that both $|\Q|$ and $|\attset(\Q)|$ are constants.

 \extraspacing {\bf Computation Model.} We will assume the {\em massively parallel computation} (MPC) model, which is a widely-accepted abstraction of today's massively parallel systems. Denote by $p$ the number of machines. In the beginning, the input elements are evenly distributed across these machines. For a join query, this means that each machine stores $\Theta(m/p)$ tuples from the input relations (we consider that every value in $\dom$ can be encoded in a single word). 

\vgap

An algorithm is executed in {\em rounds}, each having two phases: 
\begin{itemize} 
    \item In the first phase, every machine performs computation on the data of its local storage. 
   
    \item In the second phase, the machines communicate by sending messages to each other. 
\end{itemize}
All the messages sent out in the second phase must be prepared in the first phase. This prevents a machine from, for example, sending information based on the data received {\em during} the second phase. Another round is launched only if the current round has not solved the problem. In our context, {\em solving} a join query means that, for every tuple $\bm{u}$ in the join result, at least one of the machines has $\bm{u}$ in the local storage; furthermore, no tuples outside the join result should be produced. 

\vgap

The {\em load} of a round is the largest number of words received by a machine in this round, that is, if machine $i \in [1, p]$ receives $x_i$ words, the load is $\max_{i=1}^p x_i$. The performance of an algorithm is measured by two metrics: (i) the number of rounds, and (ii) the {\em load} of the algorithm, defined as the total load of all rounds. CPU computation is for free. We will be interested only in algorithms finishing in a constant number of rounds. The load of such an algorithm is asymptotically the same as the maximum load of the individual rounds.

\vgap

The number $p$ of machines is assumed to be significantly less than $m$, which in this paper means $p^3 \le m$. For a randomized algorithm, when we say that its load is at most $L$, we mean that its load is bounded by $L$ with probability at least $1-1/p^c$ where $c$ can be set to an arbitrarily large constant. The notation $\tO(.)$ hides a factor that is polylogarithmic to $m$ and $p$.

\subsection{Previous Results} \label{sec&intro-prev}

Early work on join processing in the MPC model aimed to design algorithms performing only one round. Afrati and Ullman \cite{au11} explained how to answer a query $\Q$ with load $O(m/p^{1/|\Q|})$. Later, by refining their prior work in \cite{bks17c}, Koutris, Beame, and Suciu \cite{kbs16} described an algorithm that can guarantee a load of $\tO(m/p^{1/\psi})$, where $\psi$ is the query's {\em fractional edge quasi-packing number}. To follow our discussion in Section~\ref{sec&intro}, the reader does not need the formal definition of $\psi$ (which will be given in Section~\ref{sec&hypergraph}); it suffices to understand that $\psi$ is a positive constant which can vary significantly depending on $\Q$. In \cite{kbs16}, the authors also proved that any one-round algorithm must incur a load of $\Omega(m / p^{1/\psi})$, under certain assumptions on the statistics available to the algorithm.

\vgap 

Departing from the one-round restriction, subsequent research has focused on algorithms performing multiple, albeit still a constant number of, rounds. The community already knows \cite{kbs16} that any constant-round algorithm must incur a load of $\Omega (m/p^{1/\rho})$ answering a query, where $\rho$ is the query's {\em fractional edge covering number}. As far as Section~\ref{sec&intro} is concerned, the reader does not need to worry about the definition of $\rho$ (which will appear in Section~\ref{sec&hypergraph}); it suffices to remember two facts: 
\begin{itemize}
    \item Like $\psi$, $\rho$ is a positive constant which can vary significantly depending on the query $\Q$. 
    
    \item On the same $\Q$, $\rho$ never exceeds $\psi$, but can be much smaller than $\psi$ (more details in Section~\ref{sec&hypergraph}). 
\end{itemize}
The second bullet indicates that $m/p^{1/\rho}$ can be far less than $m/p^{1/\psi}$, suggesting that we may hope to significantly reduce the load by going beyond only one round. Matching the lower bound $\Omega(m/p^{1/\rho})$ with a concrete algorithm has been shown possible for several special query classes, including star joins \cite{au11}, cycle joins \cite{kbs16}, clique joins \cite{kbs16}, line joins \cite{au11,kbs16}, Loomis-Whitney joins \cite{kbs16}, etc. The simple binary join defined in Section~\ref{sec&intro-prob} captures cycle, clique, and line joins as special cases. Guaranteeing a load of $O(m/p^{1/\rho})$ for arbitrary simple binary queries is still open. 

\subsection{Our Contributions} \label{sec&intro-ours}
The paper's main algorithmic contribution is to settle any simple binary join $\Q$ under the MPC model with load $\tO(m / p^{1/\rho})$ in a constant number rounds (Theorem~\ref{thm&alg-main}). The load is optimal up to a polylogarithmic factor. Our algorithm owes to a new theorem ---  we name the {\em isolated cartesian product theorem} (Theorem~\ref{thm&cpthm}; see also Theorem~\ref{thm&cpthm2}) --- that reveals a non-trivial fact on the problem's mathematical structure.  

\extraspacing {\bf Overview of Our Techniques.} Consider the join query $\Q$ illustrated by the graph in Figure~\ref{fig&intro-join}a. An edge connecting vertices $X$ and $Y$ represents a relation $R_{\{X,Y\}}$ with scheme $\{X, Y\}$. $\Q$ contains all the 18 relations represented by the edges in Figure~\ref{fig&intro-join}a; $\attset(\Q) = \{\ttt{A}, \ttt{B}, ..., \ttt{L}\}$ has a size of 12. 

\vgap

Set $\lambda = \Theta(p^{1/(2\rho)})$ where $\rho$ is the fractional edge covering number of $\Q$ (Section~\ref{sec&hypergraph}). A value $x \in \dom$ is {\em heavy} if at least $m/\lambda$ tuples in an input relation $R \in \Q$ carry $x$ on the same attribute. The number of heavy values is $O(\lambda)$. A value $x \in \dom$ is {\em light} if $x$ appears in at least one relation $R \in \Q$ but is not heavy. A tuple in the join result may take a heavy or light value on each of the 12 attributes $\ttt{A}, ..., \ttt{L}$. As there are $O(\lambda)$ choices on each attribute (i.e., either a light value or one of the $O(\lambda)$ heavy values), there are $t = O(\lambda^{12})$ ``choice combinations'' from all attributes; we will refer to each combination as a {\em configuration}. Our plan is to partition the set of $p$ servers into $t$ subsets of sizes $p_1, p_2, ..., p_t$ with $\sum_{i=1}^t p_i = p$, and then dedicate $p_i$ servers ($1 \le i \le t$) to computing the result tuples of the $i$-th configuration. This can be done in parallel for all $O(\lambda^{12})$ configurations. The challenge is to compute the query on each configuration with a load $O(m/p^{1/\rho})$, given that only $p_i$ (which can be far less than $p$) servers are available for that subtask.

\begin{figure} 
    \centering 
    \begin{tabular}{cccc}
        \includegraphics[height=30mm]{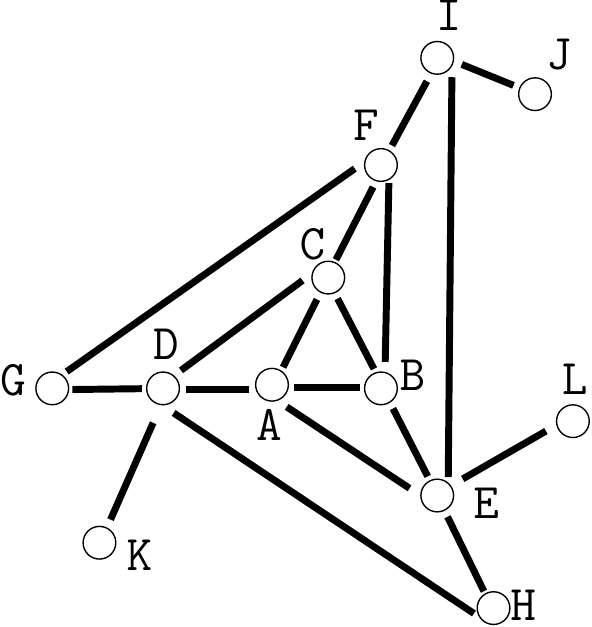}
        &
        \includegraphics[height=30mm]{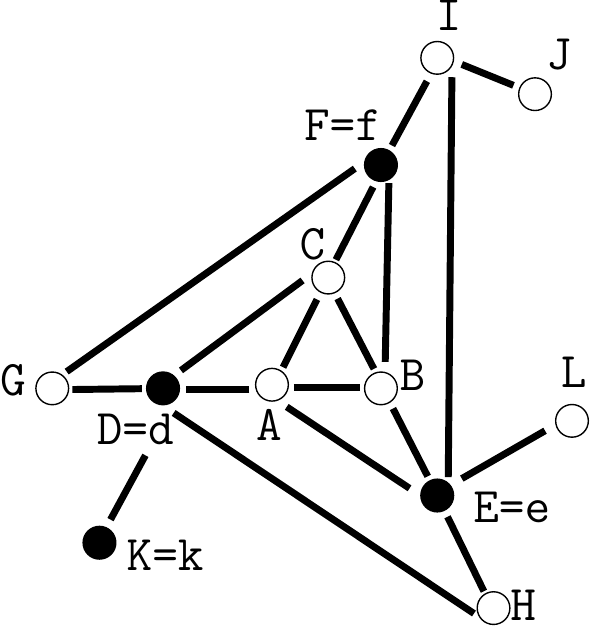}
        &
        \includegraphics[height=30mm]{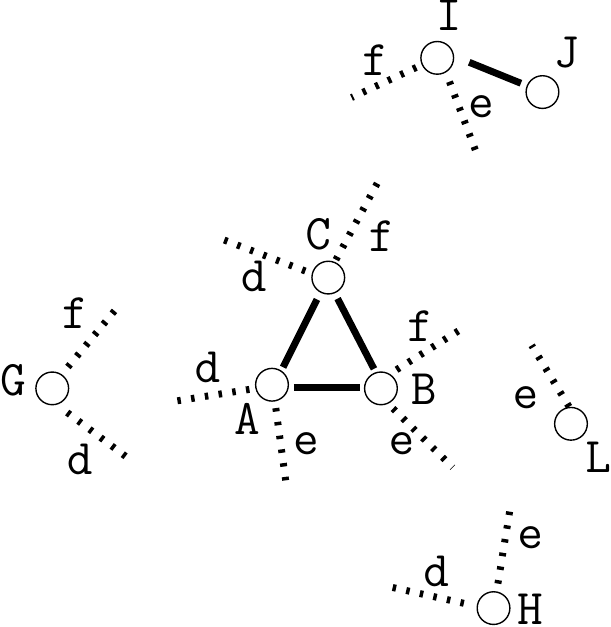} 
        &
        \includegraphics[height=30mm]{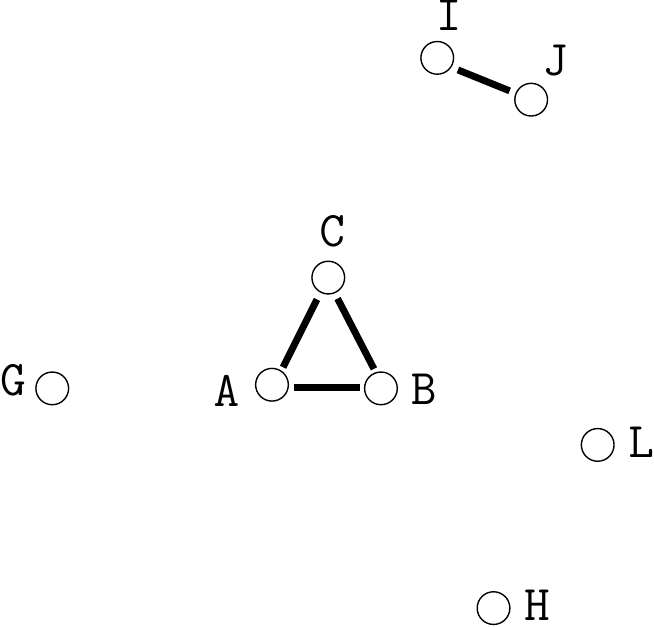} 
        \\
        (a) A join query
        &
        (b) A residual query 
        &
        (c) After deleting  
        &
        (d) After semi-join \\[-1mm]
        &
        &
        black verticess
        &
        reduction
    \end{tabular}
    \figcapup 
    \caption{Processing a join by constraining heavy values} 
    \label{fig&intro-join}
    \figcapdown 
\end{figure}

\vgap

Figure~\ref{fig&intro-join}b illustrates one possible configuration where we constrain attributes $\ttt{D}$, $\ttt{E}$, $\ttt{F}$, and $\ttt{K}$ respectively to heavy values $\ttt{d}$, $\ttt{e}$, $\ttt{f}$, and $\ttt{k}$ and the other attributes to light values. Accordingly, vertices $\ttt{D}$, $\ttt{E}$, $\ttt{F}$, and $\ttt{K}$ are colored black in the figure. The configuration gives rise to a {\em residual query} $\Q'$: 
\begin{itemize}
    \item For each edge $\{X,Y\}$ with two white vertices, $\Q'$ has a relation $R'_{\{X,Y\}}$ that contains only the tuples in $R_{\{X,Y\}} \in \Q$ using light values on both $X$ and $Y$; 
    
    \item For each edge $\{X,Y\}$ with a white vertex $X$ and a black vertex $Y$, $\Q'$ has a relation $R'_{\{X,Y\}}$ that contains only the tuples in $R_{\{X,Y\}} \in \Q$ each using a light value on $X$ and the constrained heavy value on $Y$;  
    
    \item For each edge $\{X,Y\}$ with two black vertices, $\Q'$ has a  relation $R'_{\{X,Y\}}$ with only one tuple that takes the constrained heavy values on $X$ and $Y$, respectively. 
\end{itemize}
For example, a tuple in $R'_{\{\ttt{A},\ttt{B}\}}$ must use light values on both $\ttt{A}$ and $\ttt{B}$; a tuple in $R'_{\{\ttt{D},\ttt{G}\}}$ must use value $\ttt{d}$ on $\ttt{D}$ and a light value on $\ttt{G}$; $R'_{\{\ttt{D},\ttt{K}\}}$ has only a single tuple with values $\ttt{d}$ and $\ttt{k}$ on $\ttt{D}$ and $\ttt{K}$, respectively. Finding all result tuples for $\Q$ under the designated configuration amounts to evaluating $\Q'$.

\vgap 

Since the black attributes have had their values fixed in the configuration, they can be deleted from the residual query, after which some relations in $\Q'$ become unary or even disappear. Relation $R'_{\{\ttt{A},\ttt{D}\}} \in \Q'$, for example, can be regarded as a unary relation over $\{\ttt{A}\}$ where every tuple is ``piggybacked'' the value $\ttt{d}$ on $\ttt{D}$. Let us denote this unary relation as $R'_{\{\ttt{A}\}|\ttt{d}}$, which is illustrated 
in Figure~\ref{fig&intro-join}c with a dotted edge extending from $\ttt{A}$ and carrying the label $\ttt{d}$. The deletion of $\ttt{D}, \ttt{E}$, $\ttt{F}$, and $\ttt{K}$ results in 13 unary relations (e.g., two of them are over $\{\ttt{A}\}$: $R'_{\{\ttt{A}\}|\ttt{d}}$ and $R'_{\{\ttt{A}\}|\ttt{e}}$). Attributes \ttt{G}, \ttt{H}, and \ttt{L} become {\em isolated} because they are not connected to any other vertices by solid edges. Relations $R'_{\{\ttt{A}, \ttt{B}\}}$, $R'_{\{\ttt{A}, \ttt{C}\}}$, $R'_{\{\ttt{B}, \ttt{C}\}}$, and $R'_{\{\ttt{I}, \ttt{J}\}}$ remain binary, whereas $R'_{\{\ttt{D},\ttt{K}\}}$ has disappeared (more precisely, if $R_{\{\ttt{D},\ttt{K}\}}$ does not contain a tuple taking values \ttt{d} and \ttt{k} on \ttt{D} and \ttt{K} respectively, then $\Q'$ has an empty answer; otherwise, we proceed in the way explained next).

\vgap 

Our algorithm solves the residual query $\Q'$ of Figure~\ref{fig&intro-join}c as follows: 
\begin{enumerate} 
    \item {\em Perform a semi-join reduction.} There are two steps. First, for every vertex $X$ in Figure~\ref{fig&intro-join}c, intersect all the unary relations over $\{X\}$ (if any) into a single list $R''_{\{X\}}$. For example, the two unary relations $R'_{\{\ttt{A}\}|\ttt{d}}$ and $R'_{\{\ttt{A}\}|\ttt{e}}$ of $\ttt{A}$ are intersected to produce $R''_{\{\ttt{A}\}}$; only the values in the intersection can appear in the join result. Second, for every non-isolated attribute $X$ in Figure~\ref{fig&intro-join}c, use $R''_{\{X\}}$ to shrink each binary relation $R'_{\{X, Y\}}$ (for all relevant $Y$) to eliminate tuples whose $X$-values are absent in $R''_{\{X\}}$. This reduces $R'_{\{X, Y\}}$ to a subset $R''_{\{X, Y\}}$. For example, every tuple in $R''_{\{\ttt{A},\ttt{B}\}}$ uses an $\ttt{A}$-value from $R''_{\{\ttt{A}\}}$ and a $\ttt{B}$-value from $R''_{\{\ttt{B}\}}$.
    
    \vgap
    
    \item {\em Compute a cartesian product.} The residual query $\Q'$ can now be further simplified into a join query $\Q''$ which includes (i) the relation $R''_{\{X\}}$ for every isolated attribute $X$, and (ii) the relation $R''_{\{X, Y\}}$ for every solid edge in Figure~\ref{fig&intro-join}c. Figure~\ref{fig&intro-join}d gives a neater view of $\Q''$; clearly, $\join(\Q'')$ is the cartesian product of $R''_{\{\ttt{G}\}}$, $R''_{\{\ttt{H}\}}$, $R''_{\{\ttt{L}\}}$, $R''_{\{\ttt{I}, \ttt{J}\}}$, and the result of the ``triangle join'' $\{R''_{\{\ttt{A}, \ttt{B}\}}, R''_{\{\ttt{A}, \ttt{C}\}}$, $R''_{\{\ttt{B}, \ttt{C}\}}\}$. 
\end{enumerate}

\vgap

As mentioned earlier, we plan to use only a small subset of the $p$ servers to compute $\Q'$. It turns out that the load of our strategy depends heavily on the cartesian product of the {\em unary} relations $R''_{\{X\}}$ (one for every isolated attribute $X$, i.e., $R''_{\{\ttt{G}\}}$, $R''_{\{\ttt{H}\}}$, and $R''_{\{\ttt{L}\}}$ in our example) in a configuration. Ideally, if the cartesian product of {\em every} configuration is small, we can prove a load of $\tO(m/p^{1/\rho})$ easily. Unfortunately, this is not true: in the worst case, the cartesian products of various configurations can differ dramatically. 

\begin{table}
  \resizebox{\textwidth}{!}{%
    \begin{tabular}{c|l|c}
        {\bf symbol} & {\bf meaning} & {\bf definition} \\ 
        \hline
        $p$ & number of machines & {Sec~\ref{sec&intro-prob}} \\ 
        $\Q$ & join query & {Sec~\ref{sec&intro-prob}} \\ 
        $m$ & input size of $\Q$ & {\eqref{eqn&inputsize}} \\
        $\join(\Q)$ & result of $\Q$ & {Sec~\ref{sec&intro-prob}} \\ 
        $\attset(\Q)$ & set of attributes in the relations of $\Q$ &  {Sec~\ref{sec&intro-prob}} \\
        $\G(\V, \E)$ & hypergraph of $\Q$ & {Sec~\ref{sec&hypergraph}} \\ 
        $W$ & fractional edge covering/packing of $\G$ & {Sec~\ref{sec&hypergraph}} \\ 
        $W(e)$ & weight of an edge $e \in \E$ & {Sec~\ref{sec&hypergraph}} \\ 
        $\rho$ (or $\tau$) & fractional edge covering (or packing) number of $\G$ & {Sec~\ref{sec&hypergraph}} \\  
        $R_e$ ($e \in \E$) & relation $R \in \Q$ with $\scheme(R) = e$ & {Sec~\ref{sec&hypergraph}} \\
        $\lambda$ & heavy parameter & {Sec~\ref{sec&taxonomy}} \\ 
        $\H$ & set of heavy attributes in $\attset(\Q)$ & {Sec~\ref{sec&taxonomy}} \\ 
        $\config(\Q,\H)$ & set of configurations of $\H$ & {Sec~\ref{sec&taxonomy}} \\ 
        $\bm{\eta}$ & configuration & {Sec~\ref{sec&taxonomy}} \\ 
        $R'_e(\bm{\eta})$ & residual relation of $e \in \E$ under $\bm{\eta}$ & {Sec~\ref{sec&taxonomy}} \\ 
        $\Q'(\bm{\eta})$ & residual query under $\bm{\eta}$ & {\eqref{eqn&Q'-eta}} \\ 
        $k$ & size of $\attset(\Q)$ & {Lemma~\ref{lmm&taxonomy-inputsize}} \\ 
        $m_{\bm{\eta}}$ & input size of $\Q'(\bm{\eta})$ & {Lemma~\ref{lmm&taxonomy-inputsize}} \\ 
        $\L$ & set of light attributes in $\attset(\Q)$ & \eqref{eqn&L} \\
        $\I$ & set of isolated attributes in $\attset(\Q)$ & \eqref{eqn&I} \\
        $R''_X(\bm{\eta})$ & relation on attribute $X$ after semi-join reduction & \eqref{eqn&R''-X} \\
        $R''_e(\bm{\eta})$ & relation on $e \in \E$ after semi-join reduction & {Sec~\ref{sec&framework-semi}}\\
        $\Q''_\iso(\bm{\eta})$ & query on the isolated attributes after semi-join reduction & \eqref{eqn&Q''-iso} \\
        $\Q''_\mit{light}(\bm{\eta})$ & query on the light edges after semi-join reduction & \eqref{eqn&Q''-light} \\
        $\Q''(\bm{\eta})$ & reduced query under $\bm{\eta}$ & \eqref{eqn&Q''} \\
        $W_\I$ & total weight of all vertices in $\I$ under fractional edge packing $W$ & \eqref{eqn&W_I} \\ 
        $\J$ & non-empty subset of $\I$ & {Sec~\ref{sec&cpthm2}} \\
        $\Q''_\J(\bm{\eta})$ & query on the isolated attributes in $\J$ after semi-join reduction & \eqref{eqn&Q''_J} \\
        $W_\J$ & total weight of all vertices in $\J$ under fractional edge packing $W$ & \eqref{eqn&W_J}   
    \end{tabular}%
  }
    \caption{Frequently used notations} 
    \label{tab&symbols}
\end{table}

\vgap

Our isolated cartesian product theorem (Theorem~\ref{thm&cpthm}) shows that the cartesian product size is small when {\em averaged over all the possible configurations}. This property allows us to allocate a different number of machines to process each configuration in parallel while ensuring that the total number of machines required will not exceed $p$. The theorem is of independent interest and may be useful for developing join algorithms under other computation models (e.g., the external memory model \cite{av88}; see Section~\ref{sec&conclusion}). 

\subsection{An Application: Subgraph Enumeration}

The joins studied in this paper bear close relevance to the {\em subgraph enumeration problem}, where the goal is to find all occurrences of a pattern subgraph $G' = (V', E')$ in a graph $G = (V, E)$. This problem is NP-hard \cite{c71} if the size of $G'$ is unconstrained, but is polynomial-time solvable when $G'$ has only a constant number of vertices. In the MPC model, the edges of $G$ are evenly distributed onto the $p$ machines at the beginning, whereas an algorithm must produce every occurrence on at least one machine in the end. The following facts are folklore regarding a constant-size $G'$: 
\begin{itemize}
    \item Every constant-round subgraph enumeration algorithm must incur a load of $\Omega(|E|/p^{1/\rho})$,\footnote{Here, we consider $|E| \ge |V|$ because vertices with no edges can be discarded directly.} where $\rho$ is the fractional edge covering number (Section~\ref{sec&hypergraph}) of $G'$. 
    
    \item The subgraph enumeration problem can be converted to a simple binary join with input size $O(|E|)$ and the same fractional edge covering number $\rho$. 
\end{itemize}
Given a constant-size $G'$, our join algorithm (Theorem~\ref{thm&alg-main}) solves subgraph enumeration with load $\tO(|E|/p^{1/\rho})$, which is optimal up to a polylogarithmic factor. 

\subsection{Remarks.} This paper is an extension of \cite{ks17} and \cite{t20}. Ketsman and Suciu \cite{ks17} were the first to discover a constant-round algorithm to solve simple binary joins with an asymptotically optimal load. Tao \cite{t20} introduced a preliminary version of the isolated cartesian product theorem and applied it to simplify the algorithm of \cite{ks17}. The current work features a more powerful version of the isolated cartesian product theorem (see the remark in Section~\ref{sec&cpthm-corollary}). Table~\ref{tab&symbols} lists the symbols that will be frequently used. 

\section{Hypergraphs and the AGM Bound} \label{sec&hypergraph}

We define a {\em hypergraph} $\G$ as a pair $(\V, \E)$ where: 
\begin{itemize} 
    \item $\V$ is a finite set, where each element is called a {\em vertex}; 
    \item $\E$ is a set of subsets of $\V$, where each subset is called a {\em (hyper-)edge}.
\end{itemize}
An edge $e$ is {\em unary} or {\em binary} if $|e| = 1$ or $2$, respectively. $\G$ is {\em binary} if all its edges are binary. 

\vgap

Given a vertex $X \in \V$ and an edge $e \in \E$, we say that $X$ and $e$ are {\em incident} to each other if $X \in e$. Two distinct vertices $X, Y \in \V$ are {\em adjacent} if there is an $e \in \E$ containing $X$ and $Y$. All hypergraphs discussed in this paper have the property that every vertex is incident to at least one edge. 

\vgap

Given a subset $\U$ of $\V$, we define the subgraph {\em induced} by $\U$ as $(\U, \E_\U)$ where $ \E_\U = \{ \U \cap e \bigm| e \in \E\}.$
    
\extraspacing {\bf Fractional Edge Coverings and Packings.} 
Let $\G = (\V, \E)$ be a hypergraph and $W$ be a function mapping $\E$ to real values in $[0, 1]$. We call $W(e)$ the {\em weight} of edge $e$ and $\sum_{e \in \E} W(e)$ the {\em total weight} of $W$. Given a vertex $X \in \V$, we refer to $\sum_{e \in \E : X \in e} W(e)$ (i.e., the sum of the weights of all the edges incident to $X$) as the {\em weight} of $X$.  

\vgap

$W$ is a {\em fractional edge covering} of $\G$ if the weight of every vertex $X \in \V$ is at least 1. The {\em fractional edge covering number} of $\G$ --- denoted as $\rho(\G)$ --- equals the smallest total weight of all the fractional edge coverings. $W$ is a {\em fractional edge packing} if the weight of every vertex $X \in \V$ is at most 1. The {\em fractional edge packing number} of $\G$ --- denoted as $\tau(\G)$ --- equals the largest total weight of all the fractional edge packings. A fractional edge packing $W$ is {\em tight} if it is simultaneously also a fractional edge covering; likewise, a fractional edge covering $W$ is {\em tight} if it is simultaneously also a fractional edge packing. Note that in a tight fractional edge covering/packing, the weight of every vertex must be exactly 1.

\vgap

Binary hypergraphs have several interesting properties:

\begin{lem} \label{lmm&coverpack}
    If $\G$ is binary, then:
    \begin{itemize}
        \item $\rho(\G) + \tau(\G) = |\V|$; furthermore, $\rho(\G) \ge \tau(\G)$, where the equality holds if and only if $\G$ admits a tight fractional edge packing (a.k.a.\ tight fractional edge covering).
        \item $\G$ admits a fractional edge packing $W$ of total weight $\tau(\G)$ such that 
        \begin{enumerate}
            \item the weight of every vertex $X \in \V$ is either 0 or 1;  
            \item if $\Z$ is the set of vertices in $\V$ with weight 0, then $\rho(\G) - \tau(\G) = |\Z|$. 
        \end{enumerate}
    \end{itemize}
\end{lem}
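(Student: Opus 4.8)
My plan is to obtain the first bullet from linear-programming duality and the second from the half-integral structure of the fractional matching polytope. Let $A \in \{0,1\}^{\V \times \E}$ be the vertex--edge incidence matrix of $\G$. Then $\rho(\G)$ and $\tau(\G)$ are the optima of the linear programs $\min\{\mathbf 1^\top W : AW \ge \mathbf 1,\ W \ge 0\}$ and $\max\{\mathbf 1^\top W : AW \le \mathbf 1,\ W \ge 0\}$, and by strong LP duality these equal, in the same order, the optimum of the \emph{fractional independent set} LP $\max\{\sum_{X} y_X : y_X+y_Y \le 1 \ \forall \{X,Y\}\in\E,\ y\ge 0\}$ and the optimum of the \emph{fractional vertex cover} LP $\min\{\sum_X y_X : y_X+y_Y \ge 1 \ \forall \{X,Y\}\in\E,\ y\ge 0\}$ --- it is precisely binarity that makes every dual constraint involve exactly two variables. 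Both optima are attained in $[0,1]^\V$: for the first this is automatic, since every vertex lies in an edge and hence $y_X \le y_X+y_Y \le 1$; for the second one caps each coordinate at $1$ without losing feasibility or increasing the objective. On the cube $[0,1]^\V$ the involution $y \mapsto \mathbf 1 - y$ interchanges the two feasible sets and replaces $\sum_X y_X$ by $|\V|-\sum_X y_X$, so $\rho(\G)+\tau(\G)=|\V|$. Finally the all-$\tfrac12$ vector is a feasible fractional vertex cover, giving $\tau(\G)\le |\V|/2\le \rho(\G)$.

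For the equality case I would first note that ``tight fractional edge packing'' and ``tight fractional edge covering'' name the same object, so the parenthetical needs no argument. If such a $W$ exists, its total weight $w$ satisfies $w\le\tau(\G)$ (it is a packing) and $w\ge\rho(\G)$ (it is a covering), so $\rho(\G)\le w\le\tau(\G)\le\rho(\G)$ and hence $\rho(\G)=\tau(\G)$. Conversely, if $\rho(\G)=\tau(\G)$ then both equal $|\V|/2$; for any maximum packing $W$ we have $\sum_{X\in\V}(\text{weight of }X) = \sum_{e\in\E}|e|\,W(e) = 2\tau(\G) = |\V|$, and as each summand is at most $1$ every vertex has weight exactly $1$, i.e.\ $W$ is also a covering and hence tight.

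For the second bullet the key external input is the half-integrality of the fractional edge packing polytope $\{W\ge 0 : AW\le \mathbf 1\}$: every vertex of this polytope is half-integral, its $1$-valued edges form a matching, its $\tfrac12$-valued edges form a vertex-disjoint union of odd cycles, and these two edge sets are vertex-disjoint (a classical fact from matching theory). I would take $W$ to be a vertex of this polytope that in addition maximizes $\mathbf 1^\top W$, i.e.\ an optimal basic feasible solution of the packing LP, so $\sum_e W(e)=\tau(\G)$. Then every vertex of $\G$ has weight $0$ (if it meets no positive-weight edge), or $1$ (if it is an endpoint of the unique $1$-edge through it, or of exactly two $\tfrac12$-edges on a cycle through it); no larger value is possible since $W$ is a packing, and no other fractional value occurs because the cycle subgraph is $2$-regular. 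This is property~(1). For property~(2), let $\Z$ be the set of weight-$0$ vertices; then $|\V|-|\Z|$ is the number of weight-$1$ vertices, which equals $\sum_{X\in\V}(\text{weight of }X)=\sum_{e\in\E}2\,W(e)=2\tau(\G)$, so $|\Z|=|\V|-2\tau(\G)=\rho(\G)-\tau(\G)$ by the first bullet.

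The only step that is not pure bookkeeping is the half-integral structure of the fractional matching polytope, so that is where I expect the real work to sit. If a self-contained argument is preferred, it can be recovered by taking a maximum fractional edge packing that minimizes the number of edges with weight strictly between $0$ and $1$, and then, inside the subgraph of such edges, pushing weight alternately by $\pm\varepsilon$ around an even cycle (contradicting minimality) or along a path between two vertices of weight $<1$ (contradicting either minimality or the maximality of $\tau(\G)$); this forces that subgraph to be a disjoint union of odd cycles, and the rest goes through as above.
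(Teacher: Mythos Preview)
Your argument is correct and complete. The paper itself does not prove this lemma from scratch: it cites Theorems 2.2.7 and 2.1.5 of Scheinerman--Ullman for the first bullet and for property~(1), and then does exactly your double-counting computation $2\tau(\G)=\sum_X(\text{weight of }X)=|\V|-|\Z|$ to get property~(2). So your derivation of property~(2) coincides with the paper's; what differs is that you supply self-contained proofs for the parts the paper outsources.

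Your route to $\rho+\tau=|\V|$ via the LP-duality pair (fractional independent set $\leftrightarrow$ fractional edge cover; fractional vertex cover $\leftrightarrow$ fractional matching) together with the involution $y\mapsto \mathbf 1-y$ is the standard and clean way to establish the Gallai-type identity the paper quotes. Your argument for property~(1) via half-integrality of the fractional matching polytope is again the classical route and is exactly what sits behind the cited Theorem~2.1.5; the optional ``push $\pm\varepsilon$ along a path/cycle'' sketch you give at the end is a fine way to make that self-contained. One small point worth making explicit: the paper defines $W$ as a map into $[0,1]$, whereas your LPs only impose $W\ge 0$; you should note (it is one line) that the box constraint is redundant at the optimum of both programs, so the optima you compute really are $\rho(\G)$ and $\tau(\G)$ as defined here.
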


\begin{proof}
    The first bullet is proved in Theorem 2.2.7 of \cite{su97}. The fractional edge packing $W$ in Theorem 2.1.5 of \cite{su97} satisfies Property (1) of the second bullet. Regarding such a $W$, we have 
    $$\tau(\G) = \textrm{total weight of $W$} = \frac{1}{2} \sum_{X \in \V} (\textrm{weight of $X$}) = (|\V| - |\Z|)/2.$$ Plugging this into $\rho(\G) + \tau(\G) = |\V|$ yields $\rho(\G) = (|\V| + |\Z|)/2$. Hence, Property (2) follows. 
\end{proof}

\extraspacing {\bf {\em Example.}} Suppose that $\G$ is the binary hypergraph in Figure~\ref{fig&intro-join}a. It has a fractional edge covering number $\rho(\G) = 6.5$, as is achieved by the function $W_1$ that maps \{\ttt{G}, \ttt{F}\}, \{\ttt{D}, \ttt{K}\}, \{\ttt{I}, \ttt{J}\}, \{\ttt{E}, \ttt{H}\}, and \{\ttt{E}, \ttt{L}\} to 1, \{\ttt{A}, \ttt{B}\}, \{\ttt{A}, \ttt{C}\}, and \{\ttt{B}, \ttt{C}\} to $1/2$, and the other edges to 0. Its fractional edge packing number is $\tau(\G) = 5.5$, achieved by the function $W_2$ which is the same as $W_1$ except that $W_2$ maps \{\ttt{E}, \ttt{L}\} to 0. Note that $W_2$ satisfies both properties of the second bullet (here $\Z = \{\ttt{L}\}$). 
\done

\extraspacing {\bf Hypergraph of a Join Query and the AGM Bound.}
Every join $\Q$ defines a hypergraph $\G = (\V, \E)$ where $\V = \attset(\Q)$ and $\E = \{ \scheme(R) \bigm| R \in \Q \}$. When $\Q$ is simple, for each edge $e \in \E$ we denote by $R_e$ the input relation $R \in \Q$ with $e = \scheme(R)$. The following result is known as the {\em AGM bound}:

\begin{lemC}[\cite{agm13}] \label{lmm&agm}
    Let $\Q$ be a simple binary join and $W$ be any fractional edge covering of the hypergraph $\G = (\V, \E)$ defined by $\Q$. Then, $     |\join(\Q)|
        \le
        \prod_{e \in \E} |R_e|^{W(e)}.$
\end{lemC}

The fractional edge covering number of $\Q$ equals $\rho(\G)$ and, similarly, the fractional edge packing number of $\Q$ equals $\tau(\G)$. 

\extraspacing {\bf Remark on the Fractional Edge Quasi-Packing Number.} Although the technical development in the subsequent sections is irrelevant to ``fractional edge quasi-packing number'', we provide a full definition of the concept here because it enables the reader to better distinguish our solution and the one-round algorithm of \cite{kbs16} (reviewed in Section~\ref{sec&intro-prev}). Consider a hypergraph $\G = (\V, \E)$. For each subset $\U \subseteq \V$, let $\G_{\setminus \U}$ be the graph obtained by removing $\U$ from all the edges of $\E$, or formally: $\G_{\setminus \U} = (\V \setminus \U, \E_{\setminus \U})$ where $\E_{\setminus \U} = \{e \setminus \U \mid e \in \E \textrm{ and } e \setminus \U \ne \emptyset\}$. The {\em fractional edge quasi-packing number} of $\G$ --- denoted as $\psi(\G)$ --- is 
\myeqn{
    \psi(\G) &=& \max_{\textrm{all $\U \subseteq \V$}} \tau(\G_{\setminus \U}) \nn 
}
where $\tau(\G_{\setminus \U})$ is the fractional edge packing number of $\G_{\setminus \U}$.

\vgap 

In \cite{kbs16}, Koutris, Beame, and Suciu proved that $\psi(\G) \ge \rho(\G)$ holds on any $\G$ (which need not be binary). In general, $\psi(\G)$ can be considerably higher than $\rho(\G)$. In fact, this is true even on ``regular'' binary graphs, about which we mention two examples (both can be found in \cite{kbs16}): 
\begin{itemize}
    \item when $\G$ is a clique, $\psi(\G) = |\V| - 1$ but $\rho(\G)$ is only $|\V|/2$; 
    \item when $\G$ is a cycle, $\psi(\G) = \lc 2(|\V| - 1)/3 \rc$ and $\rho(\G)$ is again $|\V|/2$.
\end{itemize}

\vgap

If $\G$ is the hypergraph defined by a query $\Q$, $\psi(\G)$ is said to be the query's fractional edge covering number. It is evident from the above discussion that, when $\G$ is a clique or a cycle, the load $\tO(m/p^{1/\rho(\G)})$ of our algorithm improves the load $\tO(m/p^{1/\psi(\G)})$ of \cite{kbs16} by a polynomial factor.

\section{Fundamental MPC Algorithms} \label{sec&mpcbasic} 

This subsection will discuss several building-block routines in the MPC model useful later. 

\extraspacing {\bf Cartesian Products.} 
Suppose that $R$ and $S$ are relations with disjoint schemes. Their {\em cartesian product}, denoted as $R \times S$, is a relation over $\scheme(R) \cup \scheme(S)$ that consists of all the tuples $\bm{u}$ over $\scheme(R) \cup \scheme(S)$ such that $\bm{u}[\scheme(R)] \in R$ and $\bm{u}[\scheme(S)] \in S$. 

\vgap 

The lemma below gives a deterministic algorithm for computing the cartesian product: 

\begin{lem} \label{lmm&cpalg}
    Let $\Q$ be a set of $t = O(1)$ relations $R_1, R_2, ..., R_t$ with disjoint schemes. The tuples in $R_i$ ($1 \le i \le t$) have been labeled with ids 1, 2, ..., $|R_i|$, respectively. We can deterministically compute $\cp(\Q) = R_1 \times R_2 \times ... \times R_t$ in one round with load 
    \myeqn{
    O\left(
    \max_{\textrm{non-empty $\Q' \subseteq \Q$}} 
    \fr{|\cp(\Q')|^\fr{1}{|\Q'|}}{p^\fr{1}{|\Q'|}}
    \right) \label{eqn&cp-load1}
    } 
    using $p$ machines. Alternatively, if we assume $|R_1| \ge |R_2| \ge ... \ge |R_t|$, then the load can be written as 
    \myeqn{
    O\left(
    \max_{i=1}^t  
    \fr{|\cp(\myset{R_1, R_2, ..., R_i})|^\fr{1}{i}}{p^\fr{1}{i}}
    \right). \label{eqn&cp-load2}
    }
    In \eqref{eqn&cp-load1} and \eqref{eqn&cp-load2}, the constant factors in the big-$O$ depend on $t$. 
\end{lem}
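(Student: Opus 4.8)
The plan is to give an explicit hashing-based distribution scheme in the spirit of the Afrati--Ullman / Koutris--Beame--Suciu ``hypercube'' (or ``shares'') algorithm, specialized to the case of a cartesian product, and then to argue that the resulting per-machine load can always be expressed in the form \eqref{eqn&cp-load1}, and furthermore that the maximum in \eqref{eqn&cp-load1} is actually attained by a prefix $\{R_1,\dots,R_i\}$ when the relations are sorted by decreasing size, which yields \eqref{eqn&cp-load2}. First I would fix a non-negative ``share'' vector $(p_1,\dots,p_t)$ with $\prod_i p_i = p$, conceptually arrange the $p$ machines into a $t$-dimensional grid of side lengths $p_1,\dots,p_t$, and assign tuple $u\in\join(\Q)$ to the unique machine whose $i$-th coordinate is $(\text{id of }u[\scheme(R_i)]) \bmod p_i$. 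Since tuple ids in $R_i$ are given in $\{1,\dots,|R_i|\}$, this is a deterministic round: each machine holding a tuple of $R_i$ knows exactly which ``slice'' of machines (those agreeing on the $i$-th coordinate) need it, and sends it there; conversely every result tuple is reconstructable on its target machine because all $t$ of its projections arrive there.

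Next I would bound the load. The number of tuples of $R_i$ sent to a fixed machine is $\lceil |R_i|/p_i \rceil \le |R_i|/p_i + 1$, so the total load of a machine is $O(\sum_{i=1}^t |R_i|/p_i)$ (the additive $O(t)=O(1)$ is absorbed, using $p \le$ poly and $t=O(1)$, and in any case is dominated since $|R_i|\ge 1$). Up to the constant $t$ this is $O(\max_i |R_i|/p_i)$, so I want to choose the $p_i$ to (approximately) equalize the quantities $|R_i|/p_i$ subject to $\prod_i p_i = p$ and $p_i \ge 1$ integral. If all $p_i$ could be chosen freely as positive reals with $|R_i|/p_i$ all equal to a common value $L$, then $p_i = |R_i|/L$ and $\prod_i |R_i|/L = p$, giving $L = (\prod_i |R_i|)^{1/t}/p^{1/t} = |\join(\Q)|^{1/t}/p^{1/t}$, matching the $\Q' = \Q$ term of \eqref{eqn&cp-load1}. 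The subtlety is the integrality/lower-bound constraint $p_i \ge 1$: if some relations are so small that the ``ideal'' $p_i$ would be below 1, we must set those $p_i = 1$ (keep those relations entirely replicated along their axis) and recursively re-solve the balancing problem on the remaining, larger relations with the same budget $p$. Carrying this out, the relations forced to $p_i = 1$ are exactly the smallest ones, and the load becomes the ideal balanced value over the surviving set $\Q'$, which is some suffix-complement, i.e. a prefix $\{R_1,\dots,R_i\}$ after sorting; taking the worst case over which prefix survives gives precisely \eqref{eqn&cp-load2}, and since every prefix is in particular a non-empty $\Q'\subseteq\Q$, it is bounded by \eqref{eqn&cp-load1}. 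I would also note the reverse direction is immediate for the purpose of the lemma (we only need an upper bound), but it is worth remarking that $|\cp(\Q')|^{1/|\Q'|}/p^{1/|\Q'|}$ over all $\Q'$ is a genuine lower bound too, which is why the two forms coincide.

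The step I expect to be the main obstacle is the careful handling of the integrality and the $p_i \ge 1$ constraints in the share-allocation optimization, and showing cleanly that the optimum is achieved by making the $p_i$ of the large relations ``ideal'' and pinning the small ones at $1$. Concretely, I would prove: sort so $|R_1|\ge\cdots\ge|R_t|$; let $j$ be the largest index such that setting $p_1,\dots,p_j$ proportional to $|R_1|,\dots,|R_j|$ with product $p$ yields all $p_i\ge1$ (equivalently $p^{1/j} \le (\prod_{\ell\le j}|R_\ell|)^{1/j}/|R_j|$, i.e. $|R_j|^j \le (\prod_{\ell \le j}|R_\ell|)/p$ ... up to rounding); then set $p_1,\dots,p_j$ to (the rounded versions of) those ideal values and $p_{j+1}=\cdots=p_t=1$. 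One checks the product is $\Theta(p)$ after rounding (here $p^3 \le m$ and $t=O(1)$ give enough slack to absorb rounding), and that $\max_i |R_i|/p_i = \Theta(|\cp(\{R_1,\dots,R_j\})|^{1/j}/p^{1/j})$, because for $i\le j$ the ratios are balanced by construction and for $i>j$ we have $|R_i|/p_i = |R_i| \le |R_j| \le |\cp(\{R_1,\dots,R_j\})|^{1/j}/p^{1/j}$ by the defining inequality of $j$. This establishes \eqref{eqn&cp-load2} with $i = j$, hence both bounds. The remaining details --- the explicit modular routing, the $O(t)$ additive slack, and the rounding of shares to integers --- are routine given $t = O(1)$ and $p^3 \le m$, so I would state them briefly rather than belabor them.
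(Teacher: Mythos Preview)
Your approach is essentially the paper's: arrange the machines in a $t'$-dimensional grid, set shares proportional to the relation sizes for the large relations, and broadcast the small ones. The paper defines $L_i = |\cp(\{R_1,\dots,R_i\})|^{1/i}/p^{1/i}$, takes $t'$ to be the largest index with $|R_i|\ge L_i$ for all $i\le t'$, sets $p_i=\lfloor |R_i|/L_{t'}\rfloor$ for $i\le t'$, and broadcasts $R_{t'+1},\dots,R_t$; the resulting load is $O(L_{t'}+L_{t'+1})$.

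There is one concrete slip in your argument. Your parenthetical ``equivalently $p^{1/j} \le (\prod_{\ell\le j}|R_\ell|)^{1/j}/|R_j|$, i.e.\ $|R_j|^j \le (\prod_{\ell \le j}|R_\ell|)/p$'' has the inequality reversed: ``all ideal shares $\ge 1$'' means $|R_j|\ge L_j$, i.e.\ $|R_j|^j \ge (\prod_{\ell\le j}|R_\ell|)/p$. Consequently your final chain ``$|R_i|\le |R_j|\le |\cp(\{R_1,\dots,R_j\})|^{1/j}/p^{1/j}$'' for $i>j$ is false as written: you have $|R_j|\ge L_j$, not $\le$. The correct way to bound the broadcast relations is to use the \emph{maximality} of $j$: since the ideal allocation for $\{R_1,\dots,R_{j+1}\}$ fails, we get $|R_{j+1}|<L_{j+1}$, hence $|R_i|\le |R_{j+1}|<L_{j+1}$ for every $i>j$, and the total load is $O(L_j+L_{j+1})=O(\max_i L_i)$. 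This is exactly how the paper closes the argument. With that fix your proof goes through; the remaining points (rounding, modular routing) are handled as you indicate, and the paper does not need the assumption $p^3\le m$ here.
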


\begin{proof}
    For each $i \in [1, t]$, define $\Q_i = \myset{R_1, ..., R_i}$ and $L_i = |\cp(\Q_i)|^\fr{1}{i} / p^\fr{1}{i}$. 
    Let $t'$ be the largest integer satisfying $|R_i| \ge L_i$ for all $i \in [1, t']$; $t'$ definitely exists because $|R_1| \ge L_1 = |R_1|/p$. Note that this means $|R_t| \le |R_{t-1}| \le ... \le |R_{t'+1}| < L_{t'+1}$ if $t' < t$. 
    
    \vgap
    
    Next, we will explain how to obtain $\cp(\Q_{t'})$ with load $O(L_{t'})$. If $t' < t$, this implies that $\cp(\Q)$ can be obtained with load $O(L_{t'} + L_{t'+1})$ because $R_{t'+1}, ..., R_t$ can be broadcast to all the machines with an extra load $O(L_{t'+1} \cdot (t-t')) = O(L_{t'+1})$. 
    
    \vgap 
    
    Align the machines into a $t'$-dimensional $p_1 \times p_2 \times ... \times p_{t'}$ grid where $$p_i = \lf |R_i| / L_{t'} \rf$$ for each $i \in [1, t']$. This is possible because $|R_i| \ge |R_{t'}| \ge L_{t'}$ and $ \prod_{i=1}^{t'} \fr{|R_i|}{L_{t'}} = \fr{|\cp(\Q_{t'})|}{(L_{t'})^{t'}} = p$. Each machine can be uniquely identified as a $t'$-dimensional point $(x_1, ..., x_{t'})$ in the grid where $x_i \in [1, p_i]$ for each $i \in [1, t']$. For each $R_i$, we send its tuple with id $j \in [1, |R_i|]$ to all the machines whose coordinates on dimension $i$ are $(j\mod p_i) + 1$. Hence, a machine receives $O(|R_i| / p_i) = O(L_{t'})$ tuples from $R_i$; and the overall load is $O(L_{t'} \cdot t') = O(L_{t'})$. For each combination of $\bm{u}_1,  \bm{u}_2, ..., \bm{u}_{t'}$ where $\bm{u}_i \in R_i$, some machine has received all of $\bm{u}_1, ..., \bm{u}_{t'}$. Therefore, the algorithm is able to produce the entire $\cp(\Q_{t'})$. 
\end{proof}

The load in \eqref{eqn&cp-load2} matches a lower bound stated in Section 4.1.5 of \cite{kss18}. The algorithm in the above proof generalizes an algorithm in \cite{hyt19} for computing the cartesian product of $t=2$ relations. The randomized hypercube algorithm of \cite{bks17c} incurs a load higher than \eqref{eqn&cp-load2} by a logarithmic factor and can fail with a small probability. 

\extraspacing {\bf Composition by Cartesian Product.} If we already know how to solve queries $\Q_1$ and $\Q_2$ separately, we can compute the cartesian product of their results efficiently:

\begin{lem} \label{lmm&cpcomp}
    Let $\Q_1$ and $\Q_2$ be two join queries satisfying the condition $\attset(\Q_1) \cap \attset(\Q_2) = \emptyset$. Let $m$ be the total number of tuples in the input relations of $\Q_1$ and $\Q_2$. Suppose that
    \begin{itemize}
        \item with probability at least $1 - \delta_1$, we can compute in one round $\join(\Q_1)$ with load $\tO(m / p_1^{1/t_1})$ using $p_1$ machines;
        
        \item with probability at least $1 - \delta_2$, we can compute in one round $\join(\Q_2)$ with load $\tO(m / p_2^{1/t_2})$ using $p_2$ machines.
    \end{itemize}
    Then, with probability at least $1 - \delta_1 - \delta_2$, we can compute $\join(\Q_1) \times \join(\Q_2)$ in one round with load
    $\tO(\max\{m / p_1^{1/t_1}, m / p_2^{1/t_2}\})$ 
    using $p_1 p_2$ machines.
\end{lem}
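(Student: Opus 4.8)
The plan is to overlay the single communication round of the first algorithm and the single round of the second one into one round of the composite algorithm, using a product arrangement of the machines. Identify the $p_1 p_2$ machines with the cells $(i,j)$, $i \in [1,p_1]$, $j \in [1,p_2]$, of a $p_1 \times p_2$ grid. Row $i$ (the set $\{(i,j) : j \in [1,p_2]\}$) will jointly simulate ``machine $i$'' of the algorithm $\A_1$ for $\Q_1$, and column $j$ will jointly simulate ``machine $j$'' of the algorithm $\A_2$ for $\Q_2$. Since $\attset(\Q_1)$ and $\attset(\Q_2)$ are disjoint, the two simulations never interfere at the level of tuples.

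Concretely, in the local phase of the one round, each machine $(i,j)$ scans the part of the input of $\Q_1 \cup \Q_2$ it stores; for every tuple of a relation of $\Q_1$ it applies $\A_1$'s message-generation rule and, instead of sending the resulting message to ``machine $i'$'' of $\A_1$, it sends it to all $p_2$ machines of row $i'$; symmetrically for tuples of $\Q_2$ and columns. After the communication phase, every machine of row $i$ has received exactly the messages $\A_1$'s machine $i$ would have received, so each of them can reproduce locally the portion $J_1^{(i)} \subseteq \join(\Q_1)$ output by $\A_1$'s machine $i$; likewise machine $(i,j)$, being in column $j$, reproduces the portion $J_2^{(j)} \subseteq \join(\Q_2)$ of $\A_2$'s machine $j$. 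The machine then outputs $J_1^{(i)} \times J_2^{(j)}$, a free local computation. (Making a virtual machine's outgoing messages computable by a whole group, and its incoming messages usable by every member of the group, is possible within one round for the one-round primitives actually used in this paper, e.g.\ Lemma~\ref{lmm&cpalg} and hash-based distribution, since they prepare messages tuple by tuple; alternatively one may prepend a redistribution round delivering an evenly spread copy of $\Q_\ell$'s input to each of the $p_{3-\ell}$ groups, at load $O(m/p_1 + m/p_2)$, which is within the stated bound.)

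Correctness and the load bound are then immediate. From the guarantees on $\A_1$ and $\A_2$ we have $J_1^{(i)} \subseteq \join(\Q_1)$, $\bigcup_i J_1^{(i)} = \join(\Q_1)$, and the analogous facts for $\Q_2$; since $\times$ distributes over $\cup$, $\bigcup_{i,j} \big(J_1^{(i)} \times J_2^{(j)}\big) = \join(\Q_1) \times \join(\Q_2)$, and no machine produces a tuple outside this set. The number of words received by machine $(i,j)$ equals that received by $\A_1$'s machine $i$ plus that received by $\A_2$'s machine $j$, hence $\tO(m/p_1^{1/t_1}) + \tO(m/p_2^{1/t_2}) = \tO(\max\{m/p_1^{1/t_1}, m/p_2^{1/t_2}\})$; replicating each message across a group multiplies the total communication but not the per-machine load, because the machine count is larger by the same factor. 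Finally, $\A_1$ and $\A_2$ are each run as a single instance, so the composite fails only when one of them fails, and a union bound gives failure probability at most $\delta_1 + \delta_2$.

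The part I expect to need the most care is the claim that the whole construction really fits in one round: we must ensure that supplying each group with the input layout its subalgorithm expects, and disseminating each subalgorithm's per-machine output across its group, can be carried out concurrently with the subalgorithm's own communication rather than in extra rounds. This is exactly where one invokes either the tuple-by-tuple structure of the concrete one-round building blocks or the cheap auxiliary redistribution round; a minor accompanying point is to check that the replication inflates only the total and not the maximum load, and that whatever shared randomness $\A_1$ (resp.\ $\A_2$) uses is available to every machine of a group.
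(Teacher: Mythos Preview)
Your proposal is correct and rests on the same $p_1 \times p_2$ grid arrangement as the paper. The paper frames the simulation slightly differently: rather than having each group of $p_2$ machines \emph{replicate the messages} destined for one virtual machine of $\A_1$, it runs $p_2$ \emph{full independent instances} of $\A_1$ (one on each row of $p_1$ machines) and $p_1$ instances of $\A_2$ (one per column), and instructs all copies of the same subalgorithm to use identical random bits so that they succeed or fail together. The outcome is identical---cell $(i,j)$ ends up holding $\A_1$'s machine-$i$ output and $\A_2$'s machine-$j$ output---but the paper's framing is fully black-box in $\A_1,\A_2$ (no tuple-by-tuple assumption on message generation is needed), at the price of implicitly assuming each row/column has access to the full input of its subquery, which is precisely the redistribution issue you flag and handle more explicitly. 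Your load and failure-probability analyses coincide with the paper's.
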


\begin{proof}
    Let $\A_1$ and $\A_2$ be the algorithm for $\Q_1$ and $\Q_2$, respectively. If a tuple $\bm{u} \in \join(\Q_1)$ is produced by $\A_1$ on the $i$-th ($i \in [1, p_1]$) machine, we call $\bm{u}$ an {\em $i$-tuple}. Similarly, if a tuple $\bm{v} \in \join(\Q_2)$ is produced by $\A_2$ on the $j$-th ($j \in [1, p_2]$) machine, we call $\bm{v}$ a {\em $j$-tuple}.
    
    \vgap
    
    Arrange the $p_1p_2$ machines into a matrix where each row has $p_1$ machines and each column has $p_2$ machines (note that the number of rows is $p_2$ while the number of columns is $p_1$). For each row, we run $A_1$ using the $p_1$ machines on that row to compute $\join(\Q_1)$; this creates $p_2$ instances of $A_1$ (one per row). If $A_1$ is randomized, we instruct all those instances to take the same random choices.\footnote{The random choices of an algorithm can be modeled as a sequence of random bits. Once the sequence is fixed, a randomized algorithm becomes deterministic. An easy way to ``instruct'' all instances of $A_1$ to make the same random choices is to ask all the participating machines to pre-agree on the random-bit sequence. For example, one machine can generate all the random bits and send them to the other machines. Such communication happens {\em before} receiving $\Q$ and hence does not contribute to the query's load. The above approach works for a single $\Q$ (which suffices for proving Lemma~\ref{lmm&cpcomp}). There is a standard technique \cite{n91} to extend the approach to work for any number of queries. The main idea is to have the machines pre-agree on a sufficiently large number of random-bit sequences. Given a query, a machine randomly picks a specific random-bit sequence and broadcasts the sequence's id (note: only the id, not the sequence itself) to all machines. As shown in \cite{n91}, such an id can be encoded in $\tO(1)$ words. Broadcasting can be done in constant rounds with load $O(p^\eps)$ for an arbitrarily small constant $\eps > 0$.
    } This ensures:
    \begin{itemize}
        \item with probability at least $1-\delta_1$, all the instances succeed simultaneously; 
        \item for each $i \in [1, p_1]$, all the machines at the $i$-th column produce exactly the same set of $i$-tuples. 
    \end{itemize}
    The load incurred is  $\tO(m / p_1^{1/t_1})$.    
    Likewise, for each column, we run $A_2$ using the $p_2$ machines on that column to compute $\join(\Q_2)$. With probability at least $1 - \delta_2$, for each $j \in [1, p_2]$, all the machines at the $j$-th row produce exactly the same set of $j$-tuples. The load is  $\tO(m / p_2^{1/t_2})$.
    
    \vgap
    
    Therefore, it holds with probability at least $1 - \delta_1 - \delta_2$ that, for each pair $(i, j)$, some machine has produced all the $i$- and $j$-tuples. Hence, every tuple of $\join(\Q_1) \times \join(\Q_2)$ appears on a machine. The overall load is the larger between $\tO(m / p_1^{1/t_1})$ and $\tO(m / p_2^{1/t_2})$.
\end{proof}

 \extraspacing {\bf Skew-Free Queries.} It is possible to solve a join query $\Q$ on binary relations in a single round with a small load if no value appears too often. To explain, denote by $m$ the input size of $\Q$; set $k = |\attset(\Q)|$, and list out the attributes in $\attset(\Q)$ as $X_1, ..., X_k$. For $i \in [1, k]$, let $p_i$ be a positive integer referred to as the {\em share} of $X_i$. A relation $R \in \Q$ with scheme $\{X_i, X_j\}$ is {\em skew-free} if every value $x \in \dom$ fulfills both conditions below: 
\begin{itemize}
    \item $R$ has $O(m/p_i)$ tuples $\bm{u}$ with $\bm{u}(X_i) = x$;
    \item $R$ has $O(m/p_j)$ tuples $\bm{u}$ with $\bm{u}(X_j) = x$.
\end{itemize}
Define $\share(R) = p_i \cdot p_j$. If every $R \in \Q$ is skew-free, $\Q$ is {\em skew-free}. We know: 

\begin{lemC}[\cite{bks17c}] \label{lmm&skewfree}
    With probability at least $1 - 1/p^c$ where $p = \prod_{i=1}^k p_i$ and $c \ge 1$ can be set to an arbitrarily large constant, a skew-free query $\Q$ with input size $m$ can be answered in one round with load $\tO(m / \min_{R\in \Q} \share(R))$ using $p$ machines. 
\end{lemC}

\section{A Taxonomy of the Join Result} \label{sec&taxonomy}

Given a simple binary join $\Q$, we will present a method to partition $\join(\Q)$ based on the value frequencies in the relations of $\Q$. Denote by $\G = (\V, \E)$ the hypergraph defined by $\Q$ and by $m$ the input size of $\Q$. 

\extraspacing {\bf Heavy and Light Values.} Fix an arbitrary integer $\lambda \in [1, m]$. A value $x \in \dom$ is 
\begin{itemize} 
    \item {\em heavy} if $| \{ \bm{u} \in R \bigm| \bm{u}(X) = x \} | \ge m/\lambda$ for some relation $R \in \Q$ and some attribute $X \in \scheme(R)$;
  
    \item {\em light} if $x$ is not heavy, but appears in at least one relation $R \in \Q$.
\end{itemize}
It is easy to see that each attribute has at most $\lambda$ heavy values. Hence, the total number of heavy values is at most $\lambda \cdot |\attset(\Q)| = O(\lambda)$. We will refer to $\lambda$ as the {\em heavy parameter}. 

 \extraspacing {\bf Configurations.} Let $\H$ be an arbitrary (possibly empty) subset of $\attset(\Q)$. A {\em configuration} of $\H$ is a tuple $\bm{\eta}$ over $\H$ such that $\bm{\eta}(X)$ is heavy for every $X \in \H$. Let $\config(\Q, \H)$ be the set of all configurations of $\H$. It is clear that $|\config(\Q, \H)| = O(\lambda^{|\H|})$.

\extraspacing {\bf Residual Relations/Queries.} Consider an edge $e \in \E$; define $e' = e \setminus \H$. We say that $e$ is {\em active} on $\H$ if $e' \neq \emptyset$, i.e., $e$ has at least one attribute outside $\H$. 
An active $e$ defines a {\em residual relation} under $\bm{\eta}$ --- denoted as $R'_{e}(\bm{\eta})$ --- which
\begin{itemize}
    \item is over $e'$ and 
    \item consists of every tuple $\bm{v}$ that is the projection (on $e'$) of some tuple $\bm{w} \in R_e$ ``consistent'' with $\bm{\eta}$, namely: 
      \begin{itemize}
        \item $\bm{w}(X) = \bm{\eta}(X)$ for every $X \in e \cap \H$;
        \item $\bm{w}(Y)$ is light for every $Y \in e'$;
        \item $\bm{v} = \bm{w}[e']$.  
    \end{itemize}
\end{itemize}

The {\em residual query} under $\bm{\eta}$ is
\myeqn{
    \Q'(\bm{\eta}) = \left\{R'_e(\bm{\eta}) \bigm| e \in \E, \textrm{ $e$ active on $\H$ } \right\}.
    \label{eqn&Q'-eta}
}
Note that if $\H = \attset(\Q)$, $\Q'(\bm{\eta})$ is empty.

  \extraspacing {\bf {\em Example.}} Consider the query $\Q$ in Section~\ref{sec&intro-ours} (hypergraph $\G$ in Figure~\ref{fig&intro-join}a) and the configuration $\bm{\eta}$ of $\H = \{\ttt{D}, \ttt{E}, \ttt{F}, \ttt{K}\}$ where $\bm{\eta}[\ttt{D}] = \ttt{d}$, $\bm{\eta}[\ttt{E}] = \ttt{e}$, $\bm{\eta}[\ttt{F}] = \ttt{f}$, and $\bm{\eta}[\ttt{K}] = \ttt{k}$. If $e$ is the edge $\{\ttt{A}, \ttt{D}\}$, then $e' = \{\ttt{A}\}$ and $R'_{e}(\bm{\eta})$ is the relation $R'_{\{\ttt{A}\} \mid \ttt{d}}$ mentioned in Section~\ref{sec&intro-ours}. If $e$ is the edge $\{\ttt{A}, \ttt{B}\}$, then $e' = \{\ttt{A}, \ttt{B}\}$ and $R'_{e}(\bm{\eta})$ is the relation $R'_{\{\ttt{A}, \ttt{B}\}}$ in Section~\ref{sec&intro-ours}. The residual query $\Q'(\bm{\eta})$ is precisely the query $\Q'$ in Section~\ref{sec&intro-ours}.
\done

\vgap

It is rudimentary to verify
\myeqn{
        \join(\Q)
        =
        \bigcup_{\H} \Big(\bigcup_{\bm{\eta} \in \config(\Q, \H)} \join(\Q'(\bm{\eta})) \times \{\bm{\eta}\}\Big). 
        \label{eqn&taxonomy-goal}
}

\begin{lem} \label{lmm&taxonomy-inputsize}
    Let $\Q$ be a simple binary join with input size $m$ and $\H$ be a subset of $\attset(\Q)$. For each configuration $\bm{\eta} \in \config(\Q, \H)$, denote by $m_\bm{\eta}$ the total size of all the relations in $\Q'(\bm{\eta})$. We have:
    \myeqn{
        \sum_{\bm{\eta} \in \config(\Q, \H)} m_\bm{\eta}
        &\le& 
        m \cdot \lambda^{k-2} 
        \nn
    }
    where $k = |\attset(\Q)|$. 
\end{lem}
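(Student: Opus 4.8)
The plan is to bound the left-hand side edge by edge, i.e.\ to show that for each active edge $e \in \E$, the sum over all configurations $\bm{\eta}$ of $|R'_e(\bm{\eta})|$ is at most $m \cdot \lambda^{k-2}$ divided by the number of edges --- or, more cleanly, to show that $\sum_{\bm{\eta}} |R'_e(\bm{\eta})| \le |R_e| \cdot \lambda^{|\H| - |e \cap \H|}$ and then sum over $e$. The key observation is that a residual relation $R'_e(\bm{\eta})$ is built only from tuples $\bm{w} \in R_e$ that agree with $\bm{\eta}$ on $e \cap \H$; hence a fixed tuple $\bm{w} \in R_e$ contributes (its projection) to $R'_e(\bm{\eta})$ for at most one value of $\bm{\eta}$ restricted to the attributes in $e \cap \H$, and is completely unconstrained by $\bm{\eta}$ on the remaining $|\H| - |e \cap \H|$ attributes of $\H$. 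Since each of those remaining attributes has at most $\lambda$ heavy values, $\bm{w}$ contributes to at most $\lambda^{|\H| - |e \cap \H|}$ configurations.

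First I would fix an active edge $e$ and write $\sum_{\bm{\eta} \in \config(\Q,\H)} |R'_e(\bm{\eta})| \le \sum_{\bm{\eta}} |\{\bm{w} \in R_e : \bm{w} \text{ consistent with } \bm{\eta}\}|$, using that $R'_e(\bm{\eta})$ is a set of projections of such tuples, so its size is at most the number of such tuples. Then I would swap the order of summation: this double sum equals $\sum_{\bm{w} \in R_e} |\{\bm{\eta} \in \config(\Q,\H) : \bm{w} \text{ consistent with } \bm{\eta}\}|$. For a fixed $\bm{w}$, consistency forces $\bm{\eta}(X) = \bm{w}(X)$ for every $X \in e \cap \H$ (and in particular requires $\bm{w}(X)$ to be heavy there, else $\bm{w}$ is consistent with no configuration), while $\bm{\eta}$ ranges freely over heavy values on $\H \setminus e$; hence the inner count is at most $\lambda^{|\H \setminus e|} = \lambda^{|\H| - |e \cap \H|}$. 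This gives $\sum_{\bm{\eta}} |R'_e(\bm{\eta})| \le |R_e| \cdot \lambda^{|\H| - |e \cap \H|}$.

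Now I would sum over all active $e \in \E$. Since $e$ is active, $e \not\subseteq \H$, so $|e \cap \H| \le 1$ (here we use that $\G$ is binary, $|e| \le 2$), hence $|\H| - |e \cap \H| \ge |\H| - 1$, but more importantly $\lambda^{|\H| - |e\cap \H|} \le \lambda^{|\H|}$, which is too weak; I instead want to compare against $\lambda^{k-2}$. The clean way: since $|e| = 2$ and $e$ is active, at least one attribute of $e$ lies outside $\H$, so $|e \cap \H| \ge |e| - |\attset(\Q) \setminus \H| $ is not the bound I want either --- the correct bound is simply $|\H \setminus e| \le |\attset(\Q) \setminus e| = k - 2$, since $\H \subseteq \attset(\Q)$ and $e$ has $2$ elements. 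Thus $\lambda^{|\H| - |e \cap \H|} = \lambda^{|\H \setminus e|} \le \lambda^{k-2}$ for every $e$, giving $\sum_{\bm{\eta}} m_{\bm{\eta}} = \sum_{e \text{ active}} \sum_{\bm{\eta}} |R'_e(\bm{\eta})| \le \lambda^{k-2} \sum_{e \text{ active}} |R_e| \le \lambda^{k-2} \cdot m$, as desired.

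I expect the only subtle point --- hardly an obstacle --- to be the bookkeeping in the summation swap and the justification that $|R'_e(\bm{\eta})|$ is bounded by the number of consistent source tuples (projection can only shrink a relation, and distinct consistent tuples may even collapse, so the inequality is safe). Everything else is elementary counting; the binary assumption enters only through $|e| = 2$, which is what produces the exponent $k-2$ rather than $k-1$.
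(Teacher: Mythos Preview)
Your proposal is correct and follows essentially the same argument as the paper: fix an edge $e$ and a tuple $\bm{w} \in R_e$, observe that $\bm{w}$ contributes to $R'_e(\bm{\eta})$ only for configurations $\bm{\eta}$ agreeing with $\bm{w}$ on $e \cap \H$, so at most $\lambda^{|\H \setminus e|} \le \lambda^{k-2}$ configurations, and then sum over all tuples and edges. The paper's proof is just a terser version of exactly this double-counting.
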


\begin{proof}
    Let $e$ be an edge in $\E$ and fix an arbitrary tuple $\bm{u} \in R_e$. Tuple $\bm{u}$ contributes 1 to the term $m_\bm{\eta}$ only if $\bm{\eta}(X) = \bm{u}(X)$ for every attribute $X \in e \cap \H$. How many such configurations $\bm{\eta}$ can there be? As these configurations must have the same value on every attribute in $e \cap \H$, they can differ only in the attributes of $\H \setminus e$. Since each attribute has at most $\lambda$ heavy values, we conclude that the number of those configurations $\eta$ is at most $\lambda^{|\H \setminus e|}$. $|\H \setminus e|$ is at most $k - 2$ because $|\H| \le k$ and $e$ has two attributes. The lemma thus follows. 
\end{proof}

\section{A Join Computation Framework} \label{sec&framework}

Answering a simple binary join $\Q$ amounts to producing the right-hand side of \eqref{eqn&taxonomy-goal}. Due to symmetry, it suffices to explain how to do so for an arbitrary subset $\H \subseteq \attset(\Q)$, i.e., the computation of 
\myeqn{
    \bigcup_{\bm{\eta} \in \config(\Q, \H)} \join(\Q'(\bm{\eta})).  
    \label{eqn&framework-goal}
}
At a high level, our strategy (illustrated in Section~\ref{sec&intro-ours}) works as follows. Let $\G = (\V, \E)$ be the hypergraph defined by $\Q$. We will remove the vertices in $\H$ from $\G$, which disconnects $\G$ into connected components (CCs). We divide the CCs into two groups: (i) the set of CCs each involving at least 2 vertices, and (ii) the set of all other CCs, namely those containing only 1 vertex. We will process the CCs in Group 1 {\em together} using Lemma~\ref{lmm&skewfree}, process the CCs in Group 2 {\em together} using Lemma~\ref{lmm&cpalg}, and then compute the cartesian product between Groups 1 and 2 using Lemma~\ref{lmm&cpcomp}.

\vgap

Sections~\ref{sec&framework-removeH} and \ref{sec&framework-semi} will formalize the strategy into a processing framework. Sections~\ref{sec&cpthm} and \ref{sec&cpthm2} will then establish two important properties of this framework, which are the key to its efficient implementation in Section~\ref{sec&alg}. 

\subsection{Removing the Attributes in \texorpdfstring{$\bm{\H}$}{H}} \label{sec&framework-removeH}

We will refer to each attribute in $\H$ as a {\em heavy attribute}. Define 
\myeqn{
    \L &=& \attset(\Q) \setminus \H; 
    \label{eqn&L}
}
and call each attribute in $\L$ a {\em light attribute}. An edge $e \in \E$ is 
\begin{itemize}
    \item a {\em light edge} if $e$ contains two light attributes, or
    \item a {\em cross edge} if $e$ contains a heavy attribute and a light attribute. 
\end{itemize}
A light attribute $X \in \L$ is a {\em border attribute} if it appears in at least one cross edge $e$ of $\G$. Denote by $\G' = (\L, \E')$ the subgraph of $\G$ induced by $\L$. A vertex $X \in \L$ is {\em isolated} if $\{X\}$ is the only edge in $\E'$ incident to $X$. Define 
\myeqn{
    \I 
    &=& 
    \myset{X \in \L \mid \textrm{$X$ is isolated}}. 
    \label{eqn&I}
}

\extraspacing {\bf {\em Example (cont.).}} Consider again the join query $\Q$ whose hypergraph $\G$ is shown in Figure~\ref{fig&intro-join}a. Set $\H = \{\ttt{D}, \ttt{E}, \ttt{F}, \ttt{K}\}$. Set $\L$ includes all the white vertices in Figure~\ref{fig&intro-join}b. Edge $\{\ttt{A}, \ttt{B}\}$ is a light edge, $\{\ttt{A}, \ttt{D}\}$ is a cross edge, while $\{\ttt{D}, \ttt{K}\}$ is neither a light edge nor a cross edge. All the vertices in $\L$ except $\ttt{J}$ are border vertices. Figure~\ref{fig&L} shows the subgraph of $\G$ induced by $\L$, where a unary edge is represented by a box and a binary edge by a segment. The isolated vertices are $\ttt{G}$, $\ttt{H}$, and \ttt{L}.    
\done

\subsection{Semi-Join Reduction} \label{sec&framework-semi}

Recall from Section~\ref{sec&taxonomy} that every configuration $\bm{\eta}$ of $\H$ defines a residual query $\Q'(\bm{\eta})$. Next, we will simplify $\Q'(\bm{\eta})$ into a join $\Q''(\bm{\eta})$ with the same result.  

\vgap

Observe that the hypergraph defined by $\Q'(\bm{\eta})$ is always $\G' = (\L, \E')$, regardless of $\bm{\eta}$. Consider a border attribute $X \in \L$ and a cross edge $e$ of $\G = (\V,\E)$ incident to $X$. As explained in Section~\ref{sec&taxonomy}, the input relation $R_e \in \Q$ defines a unary residual relation $R'_e(\bm{\eta}) \in \Q'(\bm{\eta})$. Note that $R'_e(\bm{\eta})$ has scheme $\{X\}$. We define: 
\myeqn{
    R''_X (\bm{\eta})
    &=&
    \bigcap_{\textrm{cross edge $e \in \E$ s.t.\ $X \in e$}} R'_{e}(\bm{\eta}). \label{eqn&R''-X}
}

\begin{figure}
    \includegraphics[height=35mm]{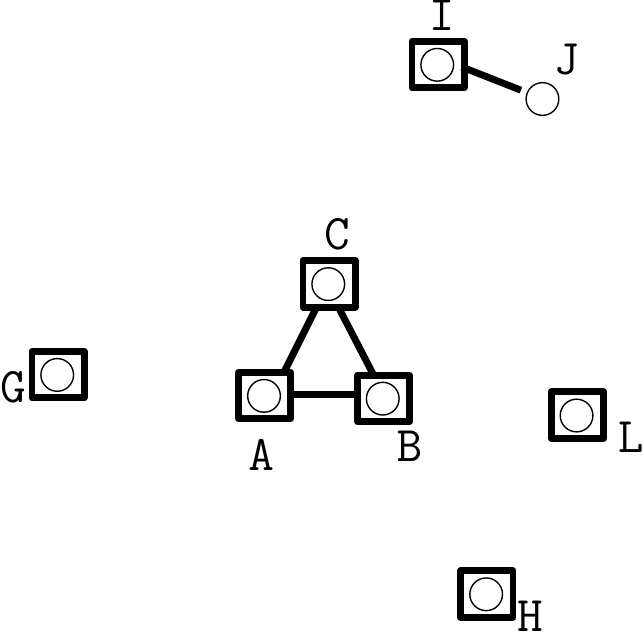} 
    \figcapup 
    \caption{Subgraph induced by $\L$} 
    \label{fig&L}
\end{figure}    

 \extraspacing {\bf {\em Example (cont.).}} Let $\H = \{\ttt{D}, \ttt{E}, \ttt{F}, \ttt{K}\}$, and consider its configuration $\bm{\eta}$ with $\bm{\eta}(\ttt{D}) = \ttt{d}$, $\bm{\eta}(\ttt{E}) = \ttt{e}$, $\bm{\eta}(\ttt{F}) = \ttt{f}$, and $\bm{\eta}(\ttt{K}) = \ttt{k}$. Set $X$ to the border attribute $\ttt{A}$. When $e$ is $\{\ttt{A}, \ttt{D}\}$ or $\{\ttt{A}, \ttt{E}\}$, $R'_e(\bm{\eta})$ is the relation $R'_{\{\ttt{A}\} \mid \ttt{d}}$ or $R'_{\{\ttt{A}\} \mid \ttt{e}}$ mentioned in Section~\ref{sec&intro-ours}, respectively. $\{\ttt{A}, \ttt{D}\}$ and $\{\ttt{A}, \ttt{E}\}$ are the only cross edges containing $\ttt{A}$. Hence, $R''_A(\bm{\eta}) = R'_{\{\ttt{A}\} \mid \ttt{d}} \cap R'_{\{\ttt{A}\} \mid \ttt{e}}$, which is the relation $R''_{\{A\}}$ in Section~\ref{sec&intro-ours}.
\done

\vgap 

Recall that every light edge $e = \{X, Y\}$ in $\G$ defines a residual relation $R'_{e}(\bm{\eta})$ with scheme $e$. We define $R''_{e}(\bm{\eta})$ as a relation over $e$ that contains every tuple $\bm{u} \in R'_{e}(\bm{\eta})$ satisfying:
\begin{itemize} 
    \item (applicable only if $X$ is a border attribute) $\bm{u}(X) \in R''_X (\bm{\eta})$; 
    \item (applicable only if $Y$ is a border attribute) $\bm{u}(Y) \in R''_Y (\bm{\eta})$.
\end{itemize}
Note that if neither $X$ nor $Y$ is a border attribute, then $R''_e(\bm{\eta}) = R'_e(\bm{\eta})$.

 \extraspacing {\bf {\em Example (cont.).}} For the light edge $e = \{\ttt{A}, \ttt{B}\}$, $R'_e(\bm{\eta})$ is the relation $R'_{\{\ttt{A}, \ttt{B}\}}$ mentioned in Section~\ref{sec&intro-ours}. Because $\ttt{A}$ and $\ttt{B}$ are border attributes, $R''_e(\bm{\eta})$ includes all the tuples in $R'_{\{\ttt{A}, \ttt{B}\}}$ that take an $\ttt{A}$-value from $R''_\ttt{A}(\bm{\eta})$ and a $\ttt{B}$-value from $R''_\ttt{B}(\bm{\eta})$. This $R''_e(\bm{\eta})$ is precisely the relation $R''_{\{\ttt{A}, \ttt{B}\}}$ in Section~\ref{sec&intro-ours}.
\done

\vgap 

Every vertex $X \in \I$ must be a border attribute and, thus, must now be associated with $R''_X(\bm{\eta})$. We can legally define: 
\myeqn{
    \Q''_\iso(\bm{\eta})
    &=&
    \{ R''_X (\bm{\eta}) \mid X \in \I \} \label{eqn&Q''-iso} \\ 
    \Q''_\mit{light} (\bm{\eta})
    &=&
    \{ R''_e(\bm{\eta}) \mid \textrm{light edge $e \in \E$} \} \label{eqn&Q''-light} \\ 
    \Q''(\bm{\eta})
    &=&
    \Q''_\mit{light}(\bm{\eta}) \cup \Q''_\iso(\bm{\eta}). \label{eqn&Q''}
}

\extraspacing {\bf {\em Example (cont.).}} $\Q''_\iso(\bm{\eta})=\{R''_{\{\ttt{G} \}}$, $R''_{\{\ttt{H}\}}$, $R''_{\{\ttt{L}\}}\}$ and $\Q''_{\mit{light}}(\bm{\eta}) = \{R''_{\{\ttt{A}, \ttt{B}\}}$, $R''_{\{\ttt{A}, \ttt{C}\}}$, $R''_{\{\ttt{B}, \ttt{C}\}}$, $R''_{\{\ttt{I}, \ttt{J}\}}\}$, where all the relation names follow those in Section~\ref{sec&intro-ours}.
\done

We will refer to the conversion from $\Q'(\bm{\eta})$ to $\Q''(\bm{\eta})$ as {\em semi-join reduction} and call $\Q''(\bm{\eta})$ the {\em reduced query} under $\bm{\eta}$. It is rudimentary to verify: 
\myeqn{
    \join(\Q'(\bm{\eta})) = \join(\Q''(\bm{\eta})) = \cp(\Q''_\iso(\bm{\eta})) \times \join(\Q''_{\mit{light}}(\bm{\eta})). \label{eqn&framework-Q'=Q''} 
}

\subsection{The Isolated Cartesian Product Theorem} \label{sec&cpthm}

As shown in \eqref{eqn&Q''-iso}, $\Q''_\iso(\bm{\eta})$ contains $|\I|$ unary relations, one for each isolated attribute in $\I$. Hence,  $\cp(\Q''_\iso(\bm{\eta}))$ is the cartesian product of all those relations. The size of $\cp(\Q''_\iso(\bm{\eta}))$ has a crucial impact on the efficiency of our join strategy because, as shown in Lemma~\ref{lmm&cpalg}, the load for computing a cartesian product depends on the cartesian product's size. To prove that our strategy is efficient, we want to argue that 
\myeqn{
        \sum_{\bm{\eta} \in \config(\Q, \H)}
        \Big|\cp(\Q''_\iso(\bm{\eta}))\Big|
        \label{eqn&cpsum-weak}
}
is low, namely, the cartesian products of all the configurations $\bm{\eta} \in \config(\Q, \H)$ have a small size overall. 

\vgap

It is easy to place an upper bound of $\lambda^{|\H|} \cdot m^{|\I|}$ on \eqref{eqn&cpsum-weak}. As each relation (trivially) has size at most $m$, we have $|\cp(\Q''_\iso(\bm{\eta}))| \le m^{|\I|}$. Given that $\H$ has at most $\lambda^{|\H|}$ different configurations, \eqref{eqn&cpsum-weak} is at most $\lambda^{|\H|} \cdot m^{|\I|}$. Unfortunately, the bound is not enough to establish the claimed performance of our MPC algorithm (to be presented in Section~\ref{sec&alg}). For that purpose, we will need to prove a tighter upper bound on \eqref{eqn&cpsum-weak} --- this is where the isolated cartesian product theorem (described next) comes in.  

\vgap

Given an arbitrary fractional edge packing $W$ of the hypergraph $\G$, we define 
\myeqn{
    W_\I &=& \sum_{Y \in \I} \textrm{weight of $Y$ under $W$}. 
    \label{eqn&W_I}
}
Recall that the weight of a vertex $Y$ under $W$ is the sum of $W(e)$ for all the edges $e \in \E$ containing $Y$. 



\begin{thm}[{\em The isolated cartesian product theorem}] \label{thm&cpthm}
    Let $\Q$ be a simple binary query whose relations have a total size of $m$. Denote by $\G$ the hypergraph defined by $\Q$. Consider an arbitrary subset $\H \subseteq \attset(\Q)$, where $\attset(\Q)$ is the set of attributes in the relations of $\Q$. Let $\I$ be the set of isolated vertices defined in \eqref{eqn&I}. Take an arbitrary fractional edge packing $W$ of $\G$.    
    It holds that
\myeqn{
        \sum_{\bm{\eta} \in \config(\Q, \H)}
        \Big|\cp(\Q''_\iso(\bm{\eta}))\Big|
        &\le&
        \lambda^{|\H| - W_\I}
        \cdot 
        m^{|\I|}
        \label{eqn&cpsum}
    }
    where $\lambda$ is the heavy parameter (Section~\ref{sec&taxonomy}), $\config(\Q,\H)$ is the set of configurations of $\H$ (Section~\ref{sec&taxonomy}), $\Q''_\iso(\bm{\eta})$ is defined in \eqref{eqn&Q''-iso}, and $W_\I$ is defined in \eqref{eqn&W_I}. 
\end{thm}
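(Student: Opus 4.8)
The plan is to bound the left-hand side of \eqref{eqn&cpsum} by counting, over all configurations $\bm{\eta} \in \config(\Q, \H)$, the number of tuples that survive into each unary relation $R''_Y(\bm{\eta})$ for $Y \in \I$, and to exploit the fractional packing $W$ to absorb a factor of $\lambda^{W_\I}$ out of the trivial bound $\lambda^{|\H|} m^{|\I|}$. First I would observe that $\big|\cp(\Q''_\iso(\bm{\eta}))\big| = \prod_{Y \in \I} |R''_Y(\bm{\eta})|$ since the cartesian product of the unary relations is exactly their size-product. Hence the left-hand side equals $\sum_{\bm{\eta}} \prod_{Y \in \I} |R''_Y(\bm{\eta})|$. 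The key structural fact is that each isolated vertex $Y$ is incident in $\G$ only to cross edges (plus possibly the unary edge $\{Y\}$ itself): by definition of $\I$, $Y$ lies in no light edge, so every edge $e \in \E$ with $Y \in e$ other than $\{Y\}$ has its other endpoint in $\H$. Each such cross edge $e = \{Y, Z\}$ with $Z \in \H$ contributes, via $R'_e(\bm{\eta})$, only tuples whose $Z$-coordinate is the fixed heavy value $\bm{\eta}(Z)$; and $R''_Y(\bm{\eta})$ is the intersection over all these, so $|R''_Y(\bm{\eta})| \le |R'_e(\bm{\eta})| \le |\{\bm{w} \in R_e : \bm{w}(Z) = \bm{\eta}(Z)\}|$ for any single cross edge $e \ni Y$.

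Next I would set up the weighted counting. Fix the packing $W$. For each $Y \in \I$ I want to "charge" the factor $|R''_Y(\bm{\eta})|$ to the edges incident to $Y$ using their $W$-weights, raising $|R''_Y(\bm{\eta})|$ to appropriate fractional powers so that the exponents sum correctly. Concretely, since $Y$ is isolated its weight $w_Y := \sum_{e \ni Y} W(e) \le 1$, and using $|R''_Y(\bm{\eta})| \le m$ together with the per-edge bound above, I would write something like
\[
    |R''_Y(\bm{\eta})| \;=\; |R''_Y(\bm{\eta})|^{1 - w_Y} \cdot \prod_{e \ni Y} |R''_Y(\bm{\eta})|^{W(e)} \;\le\; m^{1-w_Y} \cdot \prod_{e \ni Y,\, e \text{ cross}} \big|\{\bm{w}\in R_e : \bm{w}(Z_e)=\bm{\eta}(Z_e)\}\big|^{W(e)},
\]
where $Z_e$ denotes the heavy endpoint of cross edge $e$ (the unary edge $\{Y\}$, if present, carries a full relation of size $\le m$, already accounted for in the $m^{1-w_Y}$ slack or handled by noting $|R''_Y| \le |R_{\{Y\}}| \le m$). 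Taking the product over $Y \in \I$ turns the first factors into $m^{|\I| - W_\I}$, since $\sum_{Y\in\I} w_Y = W_\I$. Then, summing over $\bm{\eta}$, I would interchange sum and product and, for each cross edge $e = \{Y,Z\}$ with $Y \in \I, Z \in \H$, bound $\sum_{\text{heavy } z} |\{\bm{w}\in R_e : \bm{w}(Z)=z\}| \le |R_e| \le m$; the remaining coordinates of $\bm{\eta}$ (the heavy attributes not adjacent to any $Y \in \I$ via a cross edge, and multiplicities from attributes appearing in several edges) contribute at most $\lambda$ choices each, totalling $\lambda$ to the power $|\H|$ minus the number of "packed" incidences, which is exactly $|\H| - W_\I$ once the $W$-weights are used to discount the adjacent heavy attributes. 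Combining the $m$-powers and the $\lambda$-powers gives the bound $\lambda^{|\H|-W_\I} m^{|\I|}$.

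The main obstacle I anticipate is making the bookkeeping of the $\lambda$-exponent rigorous: the naive count gives one factor of $\lambda$ per heavy attribute in $\H$, i.e.\ $\lambda^{|\H|}$, and I must show that the $W$-weighted summation over cross edges incident to isolated vertices genuinely removes $\lambda^{W_\I}$ from this — equivalently, that a heavy attribute $Z$ adjacent to several isolated vertices is discounted by the \emph{total} weight $\sum W(e)$ over those cross edges, which is where the packing constraint (weight of $Z$ at most $1$) is needed to avoid over-counting. Care is also required when an isolated vertex $Y$ carries a genuine unary edge $\{Y\} \in \E$ with positive weight, and when a heavy attribute is incident to no cross edge leading into $\I$ (it then just contributes a free factor of $\lambda$, consistent with it contributing $0$ to $W_\I$). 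I would organize this by defining, for each heavy attribute $Z \in \H$, its "$\I$-weight" $w_Z^\I := \sum_{e = \{Y,Z\}\,:\,Y\in\I} W(e) \le 1$, noting $\sum_{Z\in\H} w_Z^\I = W_\I$, and then distributing the $\bm{\eta}$-sum as a product over $Z \in \H$ of terms each bounded by $m^{w_Z^\I} \cdot \lambda^{1 - w_Z^\I}$ (using $\sum_z |\{\bm w \in R_e : \bm w(Z)=z\}| \le m$ for the weighted part and $\le \lambda$ heavy values for the unweighted part), which multiplies out to $m^{W_\I}\lambda^{|\H|-W_\I}$; matching this against the $m^{|\I|-W_\I}$ collected from the isolated vertices yields $m^{|\I|}\lambda^{|\H|-W_\I}$ as claimed.
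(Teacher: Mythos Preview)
Your approach is correct and genuinely different from the paper's. The paper constructs an auxiliary join $\Q^*$ on the vertex set $\H \cup \I$ (with the cross edges from $\I$ into $\H$, plus unary relations of size $\le \lambda$ on each $X\in\H$ and size $\le m$ on each $Y\in\I$), shows that the left-hand side of \eqref{eqn&cpsum} is at most $|\join(\Q^*)|$, builds a tight fractional edge \emph{covering} $W^*$ of $\G^*$ from the packing $W$, and then invokes the AGM bound. Your route instead bypasses AGM: you split each $|R''_Y(\bm\eta)|$ as $|R''_Y|^{1-w_Y}\prod_{e\ni Y}|R''_Y|^{W(e)}$, bound the pieces by $m^{1-w_Y}$ and by the per-edge frequency counts $N_e(\bm\eta(Z_e))^{W(e)}$, factor the resulting sum over $\bm\eta$ coordinate-wise, and then bound $\sum_{z\text{ heavy}}\prod_e N_e(z)^{W(e)}$ by $m^{w_Z^\I}\lambda^{1-w_Z^\I}$ --- this last step is precisely the generalized H\"older inequality, and it is here (and only here) that the packing constraint $w_Z^\I\le 1$ at each heavy vertex $Z$ is needed. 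Two small remarks: since $\Q$ is binary, there are no genuine unary edges $\{Y\}\in\E$, so that worry is moot; and you should explicitly name H\"older when you write the bound $\sum_z \prod_e N_e(z)^{W(e)} \le \prod_e(\sum_z N_e(z))^{W(e)}\cdot(\#\text{heavy }z)^{1-w_Z^\I}$, since that is the crux of the argument. What the paper's packaging buys is a conceptual explanation (the bound \emph{is} an AGM bound on a derived query); what your direct argument buys is self-containment --- you never need to cite AGM, and the role of the packing constraint at both endpoints of each cross edge is made transparent.
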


Theorem~\ref{thm&cpthm} is in the strongest form when $W_\I$ is maximized. Later in Section~\ref{sec&cpthm-corollary}, we will choose a specific $W$ that yields a bound sufficient for us to prove the efficiency claim on our join algorithm. 

\vgap

\extraspacing {\bf Proof of Theorem~\ref{thm&cpthm}.} We will construct a set $\Q^*$ of relations such that $\join(\Q^*)$ has a result size at least the left-hand side of \eqref{eqn&cpsum}. Then, we will prove that the hypergraph of $\Q^*$ has a fractional edge covering that (by the AGM bound; Lemma~\ref{lmm&agm}) implies an upper bound on $|\join(\Q^*)|$ matching the right-hand side of \eqref{eqn&cpsum}.

\vgap 

\begin{figure}
    \includegraphics[height=35mm]{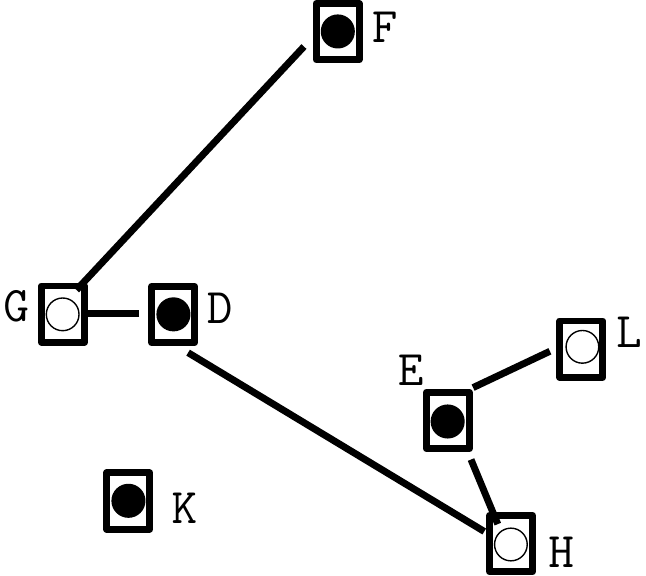} 
    \figcapup 
    \caption{Illustration of $\Q^*$} 
    \label{fig&Qstar}
\end{figure}    

Initially, set $\Q^*$ to $\emptyset$. For every cross edge $e \in \E$ incident to a vertex in $\I$, add to $\Q^*$ a relation $R^*_e = R_e$. For every $X \in \H$, add a unary relation $R^*_{\{X\}}$ to $\Q^*$ which consists of all the heavy values on $X$; note that $R^*_{\{X\}}$ has at most $\lambda$ tuples. Finally, for every $Y \in \I$, add a unary relation $R^*_{\{Y\}}$ to $\Q^*$ which  contains all the heavy and light values on $Y$. 

\vgap

Define $\G^* = (\V^*, \E^*)$ as the hypergraph defined by $\Q^*$. Note that $\V^* = \I \cup \H$, while $\E^*$ consists of all the cross edges in $\G$ incident to a vertex in $\I$, $|\H|$ unary edges $\{X\}$ for every $X \in \H$, and $|\I|$ unary edges $\{Y\}$ for every $Y \in \I$. 

 \extraspacing {\bf {\em Example (cont.).}} Figure~\ref{fig&Qstar} shows the hypergraph of the $\Q^*$ constructed. As before, a box and a segment represent a unary and a binary edge, respectively. Recall that $\H = \{\ttt{D}, \ttt{E}, \ttt{F}, \ttt{K}\}$ and $\I = \{\ttt{G}, \ttt{H}, \ttt{L}\}$.  
\done

\begin{lem} 
    $
    \sum_{\bm{\eta'} \in \config(\Q, \H)}
    \Big|\cp(\Q''_\iso(\bm{\eta'}))\Big|
    \le
    |\join(\Q^*)|. 
    $ 
\end{lem}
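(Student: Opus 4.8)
The plan is to exhibit an injection (or rather a sum-preserving correspondence) from the disjoint union of the cartesian products $\cp(\Q''_\iso(\bm{\eta}))$, taken over all $\bm{\eta} \in \config(\Q,\H)$, into $\join(\Q^*)$. Concretely, fix a configuration $\bm{\eta}$ and a tuple $\bm{t} \in \cp(\Q''_\iso(\bm{\eta}))$. Such a $\bm{t}$ assigns to each isolated attribute $Y \in \I$ a value $\bm{t}(Y)$ that lies in $R''_Y(\bm{\eta}) = \bigcap_{e} R'_e(\bm{\eta})$, the intersection ranging over cross edges $e \ni Y$; in particular $\bm{t}(Y)$ is a light value on $Y$ and for every such cross edge $e = \{Y, X\}$ (necessarily $X \in \H$, since $Y$ is isolated so it has no light neighbour), the pair $(\bm{t}(Y), \bm{\eta}(X))$ belongs to $R_e$. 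I would map the pair $(\bm{\eta}, \bm{t})$ to the tuple $\bm{u}^*$ over $\V^* = \I \cup \H$ defined by $\bm{u}^*(Y) = \bm{t}(Y)$ for $Y \in \I$ and $\bm{u}^*(X) = \bm{\eta}(X)$ for $X \in \H$.

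Next I would check that $\bm{u}^*$ is a legal tuple of $\join(\Q^*)$, i.e.\ that $\bm{u}^*[\scheme(R^*)] \in R^*$ for every $R^* \in \Q^*$. For the unary relations $R^*_{\{X\}}$ with $X \in \H$: $\bm{u}^*(X) = \bm{\eta}(X)$ is a heavy value on $X$, hence in $R^*_{\{X\}}$. For the unary relations $R^*_{\{Y\}}$ with $Y \in \I$: $R^*_{\{Y\}}$ contains \emph{all} values on $Y$, so $\bm{u}^*(Y) = \bm{t}(Y)$ is trivially in it. For the binary relations $R^*_e = R_e$ where $e$ is a cross edge incident to some $Y \in \I$, say $e = \{Y,X\}$ with $X \in \H$: as noted above, membership $\bm{t}(Y) \in R''_Y(\bm{\eta})$ forces $(\bm{t}(Y),\bm{\eta}(X)) \in R_e$, which is exactly $\bm{u}^*[e] \in R^*_e$. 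So $\bm{u}^* \in \join(\Q^*)$. Finally, the map $(\bm{\eta}, \bm{t}) \mapsto \bm{u}^*$ is injective: from $\bm{u}^*$ we recover $\bm{\eta} = \bm{u}^*[\H]$ and $\bm{t} = \bm{u}^*[\I]$, so distinct pairs give distinct images (even across different configurations, since the $\H$-part is recorded in the image). Summing over all configurations therefore gives $\sum_{\bm{\eta}} |\cp(\Q''_\iso(\bm{\eta}))| \le |\join(\Q^*)|$, which is the claim.

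I expect the only subtle point — and hence the main thing to get right — is the case analysis showing every cross edge incident to an isolated vertex has its \emph{other} endpoint in $\H$ rather than in $\L$. This is where the definition of ``isolated'' ($\{Y\}$ is the only edge of $\G'$ incident to $Y$, so $Y$ has no light neighbour in the induced subgraph) must be invoked carefully: a cross edge through $Y$ pairs $Y$ with a heavy attribute by definition of ``cross edge'', and a light edge through $Y$ would contradict isolation, so indeed all edges of $\G$ through $Y$ other than those entirely inside $\L$ connect $Y$ to $\H$; combined with $Y$ being isolated, \emph{every} edge of $\G$ incident to $Y$ is either the degenerate $\{Y\}$ (not in $\Q$) or a cross edge $\{Y, X\}$ with $X \in \H$. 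One should also double-check the boundary situations (e.g.\ $\I = \emptyset$, in which case the left side is a sum of empty products equal to $|\config(\Q,\H)|$ and one must confirm $\Q^*$ still has a result of at least that size — indeed the product of the $R^*_{\{X\}}$ over $X \in \H$ already has size $\prod_{X}|R^*_{\{X\}}|$, and each configuration yields a distinct tuple there), but these are routine once the correspondence is set up.
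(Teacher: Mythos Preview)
Your proposal is correct and follows essentially the same approach as the paper: both arguments show that the map $(\bm{\eta},\bm{t}) \mapsto \bm{u}^*$ with $\bm{u}^*[\H]=\bm{\eta}$ and $\bm{u}^*[\I]=\bm{t}$ lands in $\join(\Q^*)$ and is injective (the paper phrases this as the containment $\bigcup_{\bm{\eta'}} \cp(\Q''_\iso(\bm{\eta'})) \times \{\bm{\eta'}\} \subseteq \join(\Q^*)$). Your write-up is in fact slightly more explicit than the paper's about why the other endpoint of a cross edge through an isolated $Y$ must lie in $\H$ and about injectivity across configurations.
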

\begin{proof}
    
We will prove 
    \myeqn{
        \bigcup_{\bm{\eta'} \in \config(\Q, \H)}
        \cp(\Q''_\iso(\bm{\eta'})) \times \{\bm{\eta'}\} 
        &\subseteq&
        \join(\Q^*). \label{lmm&cpthm-cp-size-<=-constructed-join}
    } 
    from which the lemma follows. 
    
    \vgap

    Take a tuple $\bm{u}$ from the left-hand side of \eqref{lmm&cpthm-cp-size-<=-constructed-join}, and set $\bm{\eta'} = \bm{u}[\H]$. Based on the definition of $\Q''_\iso(\bm{\eta'})$, it is easy to verify that $\bm{u}[e] \in R_e$ for every cross edge $e \in \E$ incident a vertex in $\I$; hence, $\bm{u}[e] \in R^*_e$. Furthermore, $\bm{u}(X) \in R^*_{\{X\}}$ for every $X \in \H$ because $\bm{u}(X) = \bm{\eta'}(X)$ is a heavy value. Finally, obviously $\bm{u}(Y) \in R^*_{\{Y\}}$ for every $Y \in \I$. All these facts together ensure that $\bm{u} \in \join(\Q^*)$.
\end{proof}

\vgap

\begin{lem} \label{lmm&cpthm-tightcovering}
    $\G^*$ admits a tight fractional edge covering $W^*$ satisfying
    $\sum_{X \in \H} W^*(\{X\}) = |\H|-W_\I$.
\end{lem}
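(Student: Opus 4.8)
The plan is to construct $W^*$ explicitly from the given fractional edge packing $W$ of $\G$, exploiting the special structure of $\G^*$: its vertex set is $\I \cup \H$, and its edges are (i) the cross edges of $\G$ incident to some vertex of $\I$, (ii) a unary edge $\{X\}$ for each $X \in \H$, and (iii) a unary edge $\{Y\}$ for each $Y \in \I$. First I would set $W^*$ on the cross edges to agree with $W$: for a cross edge $e = \{Y, X\}$ with $Y \in \I$ and $X \in \H$ (note such an $e$ cannot have both endpoints in $\I$, since isolated vertices are pairwise non-adjacent in $\G'$, and both endpoints in $\H$ is impossible as $e$ is a cross edge), put $W^*(e) = W(e)$. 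Then each $Y \in \I$ receives, from the cross edges, exactly its $\G$-weight under $W$, call it $w_Y \le 1$; to make its total weight exactly $1$ I set $W^*(\{Y\}) = 1 - w_Y \ge 0$. Similarly each $X \in \H$ receives from the cross edges incident to it some weight $w_X \le 1$ (again at most $1$ because $W$ is a packing of $\G$), and I set $W^*(\{X\}) = 1 - w_X \ge 0$ on its unary edge.

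The next step is to verify that $W^*$ is a tight fractional edge covering of $\G^*$, i.e.\ every vertex has weight exactly $1$. For $Y \in \I$ this holds by construction: the cross edges contribute $w_Y$ and the unary edge $\{Y\}$ contributes $1 - w_Y$. For $X \in \H$ likewise the cross edges contribute $w_X$ and $\{X\}$ contributes $1 - w_X$. Since every weight lies in $[0,1]$ and every vertex has weight exactly $1$, $W^*$ is simultaneously a fractional edge covering and a fractional edge packing, hence tight.

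Finally I would compute $\sum_{X \in \H} W^*(\{X\})$. By construction this equals $\sum_{X \in \H} (1 - w_X) = |\H| - \sum_{X \in \H} w_X$, where $w_X$ is the total $W$-weight on $X$ coming from cross edges incident to a vertex of $\I$. The key observation is that, by a double-counting argument, $\sum_{X \in \H} w_X = \sum_{Y \in \I} w_Y = W_\I$: each cross edge $e$ incident to a vertex of $\I$ has exactly one endpoint in $\I$ and exactly one in $\H$, so its weight $W(e)$ is counted once in $\sum_{X \in \H} w_X$ and once in $\sum_{Y \in \I} w_Y$, and only such edges contribute to either sum. (For $\sum_{Y\in\I} w_Y = W_\I$ I also need that every edge of $\G$ incident to an isolated vertex $Y$ is a cross edge incident to a vertex of $\I$ — true since the only $\G'$-edge at $Y$ is the unary $\{Y\}$, which carries no $W$-weight in the packing sense relevant here, so all of $Y$'s $W$-weight comes from cross edges.) Hence $\sum_{X \in \H} W^*(\{X\}) = |\H| - W_\I$, as claimed.

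The main obstacle I anticipate is the bookkeeping around edges that are unary already in $\G$ (the $\{Y\}$ edges for isolated $Y$) versus the unary edges freshly introduced in $\Q^*$, and making sure that when I say "the $W$-weight of $Y$ equals the total weight of cross edges at $Y$" I have correctly accounted for the fact that $\G$ may contain a loop-like unary edge at $Y$; I would need to either argue such an edge carries weight $0$ in the extremal packing or absorb it into the definition cleanly. Once that is pinned down, the rest is the routine double-counting and the construction above; no deep idea is required beyond "transfer $W$ onto the cross edges and top up each vertex to weight $1$ using its private unary edge."
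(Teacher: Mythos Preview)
Your proposal is correct and essentially identical to the paper's proof: both transfer $W$ onto the cross edges of $\G^*$, top up each vertex to weight exactly $1$ using its private unary edge, and then finish with the same double-counting of cross-edge weights over $\H$ versus over $\I$. The obstacle you anticipate is a non-issue: $\Q$ is a simple \emph{binary} join, so $\G$ has no unary edges at all; every edge of $\G$ incident to an isolated $Y$ is automatically a (binary) cross edge, and hence $w_Y$ is exactly the weight of $Y$ under $W$ with no leftover term to worry about.
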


\begin{proof}
    We will construct a desired function $W^*$ from the fractional edge packing $W$ in Theorem~\ref{thm&cpthm}. 
    
    \vgap
    
    For every cross edge $e \in \E$ incident to a vertex in $\I$, set $W^*(e) = W(e)$. Every edge in $\E$ incident to $Y \in \I$ must be a cross edge. Hence, $\sum_{\textrm{binary $e$} \in \E^*: Y \in e} W^*(e)$ is precisely the weight of $Y$ under $W$.

    \vgap 
    
    Next, we will ensure that each attribute $Y \in \I$ has a weight 1 under $W^*$. Since $W$ is a fractional edge packing of $\G$, it must hold that $\sum_{\textrm{binary $e$} \in \E^*: Y \in e} W(e) \le 1$. This permits us to assign the following weight to the unary edge $\{Y\}$: 
    \myeqn{
        W^*(\{Y\})
        &=&
        1 - \sum_{\textrm{binary $e$} \in \E^*: Y \in e} W(e). \nn 
    }
    
    Finally, in a similar way, we make sure that each attribute $X \in \H$ has a weight 1 under $W^*$ by assigning: 
    \myeqn{
        W^*(\{X\})
        &=&
        1 - \sum_{\textrm{binary $e$} \in \E^*: X \in e} W(e). \nn 
    }
    This finishes the design of $W^*$, which is now a tight fractional edge covering of $\G^*$. 
    
    \vgap
    
    Clearly:
    \myeqn{ 
        \sum_{X \in \H} W^*(\{X\})
        &=&
        |\H| - \sum_{X \in \H} \left( \sum_{\textrm{binary $e$} \in \E^*: X \in e} W(e) \right). \label{eqn&cpsum-strong-tightcovering-help}
    }
    Every binary edge $e \in \E^*$ contains a vertex in $\H$ and a vertex in $\I$. Therefore: 
    \myeqn{
        \sum_{X \in \H} \left(\sum_{\textrm{binary $e$} \in \E^*: X \in e} W(e) \right)
        &=&
        \sum_{Y \in \I} \left( \sum_{\textrm{binary $e$} \in \E^*: Y \in e} W(e) \right)
        = 
        W_\I. \nn
    }
    Putting together the above equation with \eqref{eqn&cpsum-strong-tightcovering-help} completes the proof.
\end{proof}


The AGM bound in Lemma~\ref{lmm&agm} tells us that 
\myeqn{
    \join(\Q^*)
    &\le&
    \prod_{e \in \E^*} |R^*_e|^{W^*(e)} 
    \nn \\ 
    &
    =&  
    \Big(
    \prod_{X \in \H} |R^*_{\{X\}}|^{W^*(\{X\})} 
    \Big)
    \Big(\prod_{Y \in \I} 
    \,\,
    \prod_{\textrm{$e$} \in \E^* : Y \in e} 
    |R^*_e|^{W^*(e)} 
    \Big)
    \nn
    \\
    &\le&  
    \Big(
    \prod_{X \in \H} \lambda^{W^*(\{X\})}  
    \Big)
    \Big(\prod_{Y \in \I} 
    \,\,
    \prod_{e \in \E^* : Y \in e} 
    m^{W^*(e)} 
    \Big) \nn \\
    &&
    \explain{applying $|R^*_{\set{X}}| \le \lambda$ and $|R^*_e| \le m$} \nn \\
    &\le& 
    \lambda^{|\H|-W_\I} \cdot m^{|\I|} \nn \\ 
    &&
    \textrm{(by Lemma~\ref{lmm&cpthm-tightcovering} and $\sum_{e \in \E^*: Y \in e} W^*(e) = 1$ for each $Y$ due to tightness of $W^*$)} \nn
}
which completes the proof of Theorem~\ref{thm&cpthm}. \qed

\subsection{A Subset Extension of Theorem~\ref{thm&cpthm}} \label{sec&cpthm2}

Remember that $\Q''_\iso(\bm{\eta})$ contains a relation $R''_X(\bm{\eta})$ (defined in \eqref{eqn&R''-X}) for every attribute $X \in \I$. Given a non-empty subset $\J \subseteq \I$, define 
\myeqn{
    \Q''_\J (\bm{\eta})
    &=&
    \{
    R''_X(\bm{\eta}) \bigm| X \in \J 
    \}. \label{eqn&Q''_J} 
}
Note that $\join(\Q''_\J(\bm{\eta}))$ is the cartesian product of the relations in $\Q''_\J(\bm{\eta})$.

\vgap

Take an arbitrary fractional edge packing $W$ of the hypergraph $\G$. Define 
\myeqn{
    W_\J &=& \sum_{Y \in \J} \textrm{weight of $Y$ under $W$}. 
    \label{eqn&W_J}
}

\vgap

We now present a general version of the isolated cartesian product theorem:

\begin{thm} \label{thm&cpthm2}
    Let $\Q$ be a simple binary query whose relations have a total size of $m$. Denote by $\G$ the hypergraph defined by $\Q$. Consider an arbitrary subset $\H \subseteq \attset(\Q)$, where $\attset(\Q)$ is the set of attributes in the relations of $\Q$. Let $\I$ be the set of isolated vertices defined in \eqref{eqn&I} and $\J$ be any non-empty subset of $\I$.      
    Take an arbitrary fractional edge packing $W$ of $\G$. It holds that
    \myeqn{
        \sum_{\bm{\eta} \in \config(\Q, \H)}
        \Big|\cp(\Q''_\J(\bm{\eta}))\Big|
        &\le&
        \lambda^{|\H| - W_\J} \cdot m^{|\J|}.
        \label{eqn&cpsum2}
    }
    where $\lambda$ is the heavy parameter (see Section~\ref{sec&taxonomy}), $\config(\Q,\H)$ is the set of configurations of $\H$ (Section~\ref{sec&taxonomy}), $\Q''_\J$ is defined in \eqref{eqn&Q''_J}, and $W_\J$ is defined in \eqref{eqn&W_J}.
\end{thm}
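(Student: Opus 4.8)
The plan is to replay the proof of Theorem~\ref{thm&cpthm} almost verbatim, with the isolated set $\I$ replaced throughout by its subset $\J$. Concretely, I would build a family $\Q^*$ of relations over the attribute set $\V^* = \J \cup \H$: (i) for every cross edge $e \in \E$ incident to a vertex of $\J$, put a copy $R^*_e := R_e$ into $\Q^*$; (ii) for every $X \in \H$, put a unary relation $R^*_{\{X\}}$ consisting of all heavy values on $X$, so $|R^*_{\{X\}}| \le \lambda$; (iii) for every $Y \in \J$, put a unary relation $R^*_{\{Y\}}$ holding all values (heavy or light) occurring on $Y$, so $|R^*_{\{Y\}}| \le m$. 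Let $\G^* = (\V^*, \E^*)$ be the hypergraph of $\Q^*$; the only change from the proof of Theorem~\ref{thm&cpthm} is that $\E^*$ now retains only the cross edges incident to $\J$ (not to all of $\I$), together with the unary edges $\{X\}$ for $X\in\H$ and $\{Y\}$ for $Y\in\J$.

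Then I would establish the two lemmas corresponding exactly to those used for Theorem~\ref{thm&cpthm}. For the size bound, the identity map witnesses $\bigcup_{\bm{\eta}}\cp(\Q''_\J(\bm{\eta}))\times\{\bm{\eta}\}\subseteq\join(\Q^*)$: writing $\bm{\eta}=\bm{u}[\H]$ for a tuple $\bm{u}$ on the left, we still have $\bm{u}[e]\in R_e=R^*_e$ for each cross edge $e$ incident to some $Y\in\J$, because $R''_Y(\bm{\eta})$ is the intersection \eqref{eqn&R''-X} over \emph{all} cross edges at $Y$ (only the product in $\Q''_\J$ ranges over fewer attributes than in $\Q''_\iso$; the relations themselves are unchanged); moreover $\bm{u}(X)\in R^*_{\{X\}}$ for $X\in\H$ since $\bm{\eta}(X)$ is heavy, and $\bm{u}(Y)\in R^*_{\{Y\}}$ trivially for $Y\in\J$. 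Hence $\sum_{\bm{\eta}}|\cp(\Q''_\J(\bm{\eta}))| \le |\join(\Q^*)|$. For the covering bound I would copy the construction of $W^*$ from the fractional edge packing $W$: set $W^*(e)=W(e)$ on each binary edge of $\G^*$, then put $W^*(\{Y\})=1-\sum_{\text{binary }e\in\E^*:\,Y\in e}W(e)$ for $Y\in\J$ and $W^*(\{X\})=1-\sum_{\text{binary }e\in\E^*:\,X\in e}W(e)$ for $X\in\H$; these are all nonnegative because $W$ is a packing of $\G$ (so each partial sum is $\le 1$), and by construction every vertex of $\G^*$ has weight exactly $1$, so $W^*$ is a tight fractional edge covering. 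The double-counting identity survives: every binary edge retained in $\E^*$ is a cross edge whose unique light endpoint lies in $\J$ and whose other endpoint is therefore heavy (hence in $\H$, and not in $\J$), so $\sum_{X\in\H}\sum_{\text{binary }e\in\E^*:\,X\in e}W(e)=\sum_{Y\in\J}\sum_{\text{binary }e\in\E^*:\,Y\in e}W(e)=W_\J$, the last equality because for $Y\in\J\subseteq\I$ every edge of $\G$ at $Y$ is a cross edge (using that $\G$ is binary and $Y$ is isolated), hence lies in $\E^*$. This gives $\sum_{X\in\H}W^*(\{X\})=|\H|-W_\J$. Plugging $W^*$ into the AGM bound (Lemma~\ref{lmm&agm}) and using $|R^*_{\{X\}}|\le\lambda$, $|R^*_e|\le m$, and $\sum_{e\in\E^*:\,Y\in e}W^*(e)=1$ (tightness) yields $|\join(\Q^*)|\le\lambda^{|\H|-W_\J}\,m^{|\J|}$, which is \eqref{eqn&cpsum2}.

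I do not expect a genuine obstacle: the content is a routine specialization of Theorem~\ref{thm&cpthm}. The only place deserving a careful sentence is verifying that the structural facts exploited for $\I$ still hold after restricting to $\J$ --- namely that every edge of $\G$ incident to a vertex of $\J$ is a cross edge, and that each binary edge kept in $\E^*$ has exactly one endpoint in $\H$ and exactly one in $\J$ --- both of which follow immediately from $\J\subseteq\I$ and from $\G$ being binary. Once these are noted, the chain of inequalities and the AGM computation are word-for-word those already carried out for Theorem~\ref{thm&cpthm}. One could alternatively attempt a black-box reduction to Theorem~\ref{thm&cpthm} (e.g.\ by perturbing $\Q$ so the vertices of $\I\setminus\J$ stop being isolated), but that appears more delicate than simply repeating the argument, so I would take the direct route above.
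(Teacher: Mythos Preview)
Your proof is correct, but the paper takes a different route. You replay the argument of Theorem~\ref{thm&cpthm} with $\I$ replaced by $\J$ throughout; the paper instead applies Theorem~\ref{thm&cpthm} as a black box to an auxiliary query $\tilde{\Q}$ obtained by discarding from $\Q$ every relation whose scheme meets $\overline{\J}=\I\setminus\J$. The paper checks that in $\tilde{\Q}$ (with heavy set $\tilde{\H}=\H\cap\attset(\tilde{\Q})$) the isolated set is exactly $\J$, that the restricted packing $\tilde{W}$ still has $\tilde{W}_\J=W_\J$, and that each $\bm{\eta}\in\config(\Q,\H)$ projects to some $\tilde{\bm{\eta}}\in\config(\tilde{\Q},\tilde{\H})$ with $\cp(\Q''_\J(\bm{\eta}))=\cp(\tilde{\Q}''_\iso(\tilde{\bm{\eta}}))$, while at most $\lambda^{|\H|-|\tilde{\H}|}$ configurations share the same projection; combining these with Theorem~\ref{thm&cpthm} applied to $\tilde{\Q}$ gives~\eqref{eqn&cpsum2}. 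Amusingly, this is precisely the ``black-box reduction'' you mention in your last sentence and dismiss as more delicate; in fact it is short and clean once the right subquery is identified. Your direct approach has the advantage of being self-contained and requiring no auxiliary construction, at the cost of duplicating the AGM-plus-covering argument; the paper's reduction keeps Theorem~\ref{thm&cpthm} as the single place where that machinery lives, at the cost of a couple of verification footnotes (that $\J$ really is the isolated set of $\tilde{\Q}$, and that edges incident to $\J$ survive in $\tilde{\E}$). Both are perfectly valid.
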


\begin{proof}
    We will prove the theorem by reducing it to Theorem~\ref{thm&cpthm}. Define $\overline{\J} = \I \setminus \J$ and 
    \myeqn{
        \tilde{\Q}
        &=&
        \{R \in Q \mid \scheme(R) \cap \overline{\J} = \emptyset\}. \nn
    }
    One can construct $\tilde{\Q}$ alternatively as follows. First, discard from $\Q$ every relation whose scheme contains an attribute in $\overline{\J}$. Then, $\tilde{\Q}$ consists of the relations remaining in $\Q$.

    \vgap
    
    Denote by $\tilde{\G} = (\tilde{\V}, \tilde{\E})$ the hypergraph defined by $\tilde{\Q}$.    
    Set $\tilde{\H} = \H \cap \attset(\tilde{\Q})$ and $\tilde{\L}  = \attset(\tilde{\Q}) \setminus \tilde{\H}$. $\J$ is precisely the set of isolated attributes decided by $\tilde{\Q}$ and $\tilde{\H}$.\footnote{Let $\tilde{\I}$ be the set of isolated attributes after removing $\tilde{\H}$ from $\tilde{\G}$. We want to prove $\J = \tilde{\I}$. It is easy to show $\J \subseteq \tilde{\I}$. To prove $\tilde{\I} \subseteq \J$, suppose that there is an attribute $X$ such that $X \in \tilde{\I}$ but $X \notin \J$. As $X$ appears in $\tilde{\G}$, we know $X \notin \I$. Hence, $\G$ must contain an edge $\set{X,Y}$ with $Y \notin \H$. This means $Y \notin \I$, because of which the edge $\set{X,Y}$ is disjoint with $\overline{\J}$ and thus must belong to $\tilde{\G}$. But this contradicts the fact $X \in \tilde{\I}$.} 
    
    \vgap
    
    Define a function $\tilde{W}: \tilde{\E} \rightarrow [0, 1]$ by setting $\tilde{W}(e) = W(e)$ for every $e \in \tilde{\E}$. $\tilde{W}$ is a fractional edge packing of $\tilde{\G}$.    
    Because every edge $e \in \E$ containing an attribute in $\J$ is preserved in $\tilde{\E}$,\footnote{Suppose that there is an edge $e = \{X, Y\}$ such that $X \in \J$ and yet $e \notin \tilde{\E}$. It means that $Y \in \bar{\J} \subseteq \I$. But then $e$ is incident on two attributes in $\I$, which is impossible.} we have $W_\J = {\tilde{W}}_\J$.   
    Applying Theorem~\ref{thm&cpthm} to $\tilde{\Q}$ gives: 
    \myeqn{
        \sum_{\tilde{\bm{\eta}} \in \config(\tilde{\Q}, \tilde{\H})}
        \Big|\cp(\tilde{\Q}''_\iso(\tilde{\bm{\eta}}))\Big|
        \,\le\,
        \lambda^{|\tilde{\H}| - {\tilde{W}}_\J} \cdot m^{|\J|} 
        \,=\,
        \lambda^{|\tilde{\H}| - W_\J} \cdot m^{|\J|}
        .
        \label{eqn&cpthm2-help1}
    }
    It remains to show 
    \myeqn{
        \sum_{\bm{\eta} \in \config(\Q, \H)}
        \Big|\cp(\Q''_\J(\bm{\eta}))\Big|
        &\le&
        \lambda^{|\H| - |\tilde{\H}|}
        \sum_{\tilde{\bm{\eta}} \in \config(\tilde{\Q}, \tilde{\H})}
        \Big|\cp(\tilde{\Q}''_\iso(\tilde{\bm{\eta}}))\Big|
        \label{eqn&cpthm2-help2}
    }
    after which Theorem~\ref{thm&cpthm2} will follow from \eqref{eqn&cpthm2-help1} and \eqref{eqn&cpthm2-help2}.
    
    \vgap 
    
    For each configuration $\bm{\eta} \in \config(\Q, \H)$, we can find $\tilde{\bm{\eta}} = \bm{\eta}[\tilde{\H}] \in \config(\tilde{Q}, \tilde{H})$ such that $\cp(\Q''_\J(\bm{\eta})) = \cp(\tilde{\Q}''_\iso(\tilde{\bm{\eta}}))$. The correctness of \eqref{eqn&cpthm2-help2} follows from the fact that at most $\lambda^{|\H| - |\tilde{\H}|}$ configurations $\bm{\eta} \in \config(\Q, \H)$ correspond to the same $\tilde{\bm{\eta}}$. 
\end{proof}

\subsection{A Weaker Result} \label{sec&cpthm-corollary} 

One issue in applying Theorem~\ref{thm&cpthm2} is that the quantity $|\H| - W_\J$ is not directly related to the fractional edge covering number $\rho$ of $\Q$. The next lemma gives a weaker result that addresses the issue to an extent sufficient for our purposes in Section~\ref{sec&alg}: 

\begin{lem} \label{lmm&cpthm2-weakver}
    Let $\Q$ be a simple binary query who relations have a total size of $m$. Denote by $\G$ the hypergraph defined by $\Q$. Consider an arbitrary subset $\H \subseteq \attset(\Q)$, where $\attset(\Q)$ is the set of attributes in the relations of $\Q$. Define $\L = \attset(\Q) \setminus \H$ and $\I$ as the set of isolated vertices in $\L$ (see \eqref{eqn&I}). 
    For any non-empty subset $\J \subseteq \I$, it holds that
    \myeqn{
        \sum_{\bm{\eta} \in \config(\Q,\H)}
        \Big|\cp(\Q''_\J(\bm{\eta}))\Big|
        &\le&
        \lambda^{2 \rho - |\J| - |\L|}
        \cdot 
        m^{|\J|}
        \label{eqn&cpthm2-weakver}
    }
    where $\rho$ is the fractional edge covering number of $\G$, $\lambda$ is the heavy parameter (Section~\ref{sec&taxonomy}), $\config(\Q,\H)$ is the set of configurations of $\H$ (Section~\ref{sec&taxonomy}), and $\Q''_\J(\bm{\eta})$ is defined in \eqref{eqn&Q''_J}.
\end{lem}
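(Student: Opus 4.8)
The plan is to derive Lemma~\ref{lmm&cpthm2-weakver} from Theorem~\ref{thm&cpthm2} by choosing a particularly convenient fractional edge packing $W$ of $\G$ and then bounding the exponent $|\H| - W_\J$ in \eqref{eqn&cpsum2}. By Theorem~\ref{thm&cpthm2}, for \emph{any} fractional edge packing $W$ we have $\sum_{\bm{\eta}} |\cp(\Q''_\J(\bm{\eta}))| \le \lambda^{|\H| - W_\J} \cdot m^{|\J|}$, so it suffices to exhibit a single $W$ with $|\H| - W_\J \le 2\rho - |\J| - |\L|$, equivalently $W_\J \ge |\H| - 2\rho + |\J| + |\L| = |\attset(\Q)| - 2\rho + |\J|$ (using $|\H| + |\L| = |\attset(\Q)|$). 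So the real target is a packing under which the vertices of $\J$ carry a large total weight.

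First I would invoke Lemma~\ref{lmm&coverpack}: since $\G$ is binary, it admits a fractional edge packing $W$ of total weight $\tau(\G)$ in which every vertex has weight exactly $0$ or $1$, and if $\Z$ denotes the set of weight-$0$ vertices then $\rho(\G) - \tau(\G) = |\Z|$, i.e. $|\Z| = 2\rho - |\attset(\Q)|$ (using $\rho + \tau = |\attset(\Q)|$). Now $W_\J = \sum_{Y \in \J} (\textrm{weight of } Y) = |\J \setminus \Z| = |\J| - |\J \cap \Z| \ge |\J| - |\Z| = |\J| - (2\rho - |\attset(\Q)|)$. Plugging this into the exponent gives $|\H| - W_\J \le |\H| - |\J| + 2\rho - |\attset(\Q)| = 2\rho - |\J| - |\L|$, which is exactly the bound claimed. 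Then \eqref{eqn&cpthm2-weakver} follows immediately from \eqref{eqn&cpsum2}.

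The only subtlety — and the step I would double-check most carefully — is that the packing guaranteed by Lemma~\ref{lmm&coverpack} is a packing of the \emph{whole} hypergraph $\G = (\attset(\Q), \E)$, so $W_\J$ is computed with weights relative to $\G$; this is exactly the quantity appearing in \eqref{eqn&W_J} and hence in Theorem~\ref{thm&cpthm2}, so there is no mismatch. One should also note that the bound $|\J \cap \Z| \le |\Z|$ is the only inequality used, and it is tight exactly when $\J \subseteq \Z$, which is harmless. Since $\J$ is non-empty by hypothesis, Theorem~\ref{thm&cpthm2} applies, and the lemma is proved. In short, the proof is a one-line substitution once the right packing from Lemma~\ref{lmm&coverpack} is selected; I expect no genuine obstacle, only the bookkeeping of relating $|\Z|$, $\rho$, $\tau$, $|\H|$, $|\L|$, and $|\attset(\Q)|$ correctly.
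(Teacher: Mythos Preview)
Your proposal is correct and follows essentially the same approach as the paper: pick the special packing from Lemma~\ref{lmm&coverpack} with $0/1$ vertex weights, use $|\J\cap\Z|\le|\Z|=\rho-\tau=2\rho-|\attset(\Q)|$ to bound $W_\J$, and plug into Theorem~\ref{thm&cpthm2}. The paper's write-up introduces $\J_0=\J\cap\Z$ and $\J_1=\J\setminus\J_0$ and chains a few implications, but the content is identical to your direct computation.
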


\begin{proof}
    Let $W$ be an arbitrary fractional edge packing of $\G$ satisfying the second bullet of Lemma~\ref{lmm&coverpack}. Specifically, the weight of $W$ is the fractional edge packing number $\tau$ of $\G$; and the weight of every vertex in $\G$ is either 0 or 1. Denote by $Z$ the set of vertices in $\G$ whose weights under $W$ are 0. Lemma~\ref{lmm&coverpack} tells us $\tau + \rho = |\attset(\Q)|$ and $\rho - \tau = |\Z|$. Set $\J_0 = \J \cap Z$ and $\J_1 = \J \setminus \J_0$.    
    Because $\J_0 \subseteq \Z$, we can derive: 
\myeqn{
    \tau + |\J_0| 
    &\le& 
    \rho \,\, \Rightarrow \nn \\
    |\attset(\Q)| - \rho + |\J_0| 
    &\le& 
    \rho \,\, \Rightarrow \nn \\ 
    (|\H| + |\L|) + (|\J| - |\J_1|) 
    &\le&
    2\rho \,\, \Rightarrow \nn \\
    |\H| - |\J_1| 
    &\le& 
    2\rho - |\J| - |\L|. \nn 
}
    Lemma~\ref{lmm&cpthm2-weakver} now follows from Theorem~\ref{thm&cpthm2} due to $|\J_1| = W_\J$, which holds because every vertex in $\J_1$ has weight 1 under $W$. 
\end{proof}

\noindent 
{\bf Remark.} The above lemma was the ``isolated cartesian product theorem'' presented in the preliminary version \cite{t20} of this work. The new version (i.e., Theorem~\ref{thm&cpthm2}) is more powerful and better captures the mathematical structure underneath. 

\section{An MPC Join Algorithm} \label{sec&alg}

This section will describe how to answer a simple binary join $\Q$ in the MPC model with load $\tO(m/p^{1/\rho})$. 

\vgap

We define a {\em statistical record} as a tuple $(R, X, x, \mathit{cnt})$, where $R$ is a relation in $\Q$, $X$ an attribute in $\scheme(R)$, $x$ a value in $\dom$, and $\mathit{cnt}$ the number of tuples $\bm{u} \in R$ with $\bm{u}(X) = x$. Specially, $(R, \emptyset, \mathit{nil}, \mathit{cnt})$ is also a statistical record where $\mathit{cnt}$ gives the number of tuples in $R$ that use only light values. A {\em histogram} is defined as the set of statistical records for all possible $R$, $X$, and $x$ satisfying (i) $\mathit{cnt} = \Omega(m/p^{1/\rho})$ or (ii) $X = \emptyset$ (and, hence $x = \mathit{nil}$); note that there are only $O(p^{1/\rho})$ such records. We assume that every machine has a local copy of the histogram. By resorting to standard MPC sorting algorithms \cite{g99,hyt19}, the assumption can be satisfied with a preprocessing that takes constant rounds and load $\tO(p^{1/\rho} + m/p)$. 

\vgap 

Henceforth, we will fix the heavy parameter
\myeqn{
    \lambda &=& \Theta(p^{1/(2\rho)}) \nn
}
and focus on explaining how to compute \eqref{eqn&framework-goal} for an arbitrary subset $\H$ of $\attset(\Q)$.
As $\attset(\Q)$ has $2^k = O(1)$ subsets (where $k = |\attset(\Q)|$), processing them all in parallel increases the load by only a constant factor and, as guaranteed by \eqref{eqn&taxonomy-goal}, discovers the entire $\join(\Q)$. 

\vgap

Our algorithm produces \eqref{eqn&framework-goal} in three steps: 
\begin{enumerate}
    \item Generate the input relations of the residual query $\Q'(\bm{\eta})$ of every configuration $\bm{\eta}$ of $\H$ (Section~\ref{sec&framework-removeH}). 
    \item Generate the input relations of the reduced query $\Q''(\bm{\eta})$ of every $\bm{\eta}$ (Section~\ref{sec&framework-semi}). 
    \item Evaluate $\Q''(\bm{\eta})$ for every $\bm{\eta}$.
\end{enumerate}
The number of configurations of $\H$ is $O(\lambda^{|\H|}) = O(\lambda^k) = O(p^{k/(2\rho)})$, which is $O(p)$ because $\rho \ge k/2$ by 
the first bullet of Lemma~\ref{lmm&coverpack}. Next, we elaborate on the details of each step. 

 \extraspacing {\bf Step 1.} Lemma~\ref{lmm&taxonomy-inputsize} tells us that the input relations of all the residual queries have at most $m \cdot \lambda^{k-2}$ tuples in total. We  allocate $p'_{\bm{\eta}} = \lceil p \cdot \fr{m_{\bm{\eta}}}{\Theta(m \cdot \lambda^{k-2})} \rceil$ machines to store the relations of $\Q'(\bm{\eta})$, making sure that $\sum_{\bm{\eta}} p'_{\bm{\eta}} \le p$. Each machine keeps on average $$O(m_\bm{\eta}/p'_\bm{\eta}) = O(m \cdot \lambda^{k-2} / p) = O(m/p^{1/\rho})$$ tuples, where the last equality used $\rho \ge k/2$. Each machine $i \in [1, p]$ can use the histogram to calculate the input size $m_{\bm{\eta}}$ of $\Q'(\bm{\eta})$ precisely for each $\bm{\eta}$; it can compute locally the id range  of the $m_{\bm{\eta}}$ machines responsible for $Q'(\bm{\eta})$. If a tuple $\bm{u}$ in the local storage of machine $i$ belongs to $Q'(\bm{\eta})$, the machine sends $\bm{u}$ to a random machine within that id range. Standard analysis shows that each of the $m_{\bm{\eta}}$ machines receives asymptotically the same number of tuples of $\Q'(\bm{\eta})$ (up to an $\tO(1)$ factor) with probability at least $1 - 1/p^c$ for an arbitrarily large constant $c$. Hence, Step 1 can be done in a single round with load $\tO(m/p^{1/\rho})$ with probability at least $1 - 1/p^c$.

 \extraspacing {\bf Step 2.} Now that all the input relations of each $\Q'(\bm{\eta})$ have been stored on $p'_{\bm{\eta}}$ machines, the semi-join reduction in Section~\ref{sec&framework-semi} that converts  $\Q'(\bm{\eta})$ to  $\Q''(\bm{\eta})$ is a standard process that can be accomplished \cite{hyt19} with sorting in $O(1)$ rounds entailing a load of $\tO(m_{\bm{\eta}}/p'_{\bm{\eta}}) = \tO(m/p^{1/\rho})$; see also \cite{kbs16} for a randomized algorithm that performs fewer rounds.

 \extraspacing {\bf Step 3.} This step starts by letting each machine know about the value of $|\cp(\Q''_\iso(\bm{\eta}))|$ for every $\bm{\eta}$. For this purpose, each machine broadcasts to all other machines how many tuples it has in $R''_X(\bm{\eta})$ for every $X \in \I$ and every $\bm{\eta}$. Since there are $O(p)$ different $\bm{\eta}$, $O(p)$ numbers are sent by each machine, such that the load of this round is $O(p^2)$. From the numbers received, each machine can independently figure out the values of all $|\cp(\Q''_\iso(\bm{\eta}))|$. 
 

\vgap 

We allocate 
\myeqn{
    p''_{\bm{\eta}}
    &=&
     \Theta\left(\lambda^{|\L|} + p \cdot \sum_{\textrm{non-empty $\J \subseteq \I$}} \fr{|\cp(\Q''_\J(\bm{\eta}))|}{\lambda^{2\rho - |\J| - |\L|}
        \cdot 
        m^{|\J|}
        }
    \right)
        \label{eqn&alg-machine-num}
}
machines for computing $\Q''(\bm{\eta})$. Notice that  
\myeqn{
    \sum_{\bm{\eta}} p''_\bm{\eta} = 
    O
    \left(
    \sum_{\bm{\eta}}
        \lambda^{|\L|}
    \right) 
    +
    O
    \left(p \cdot 
    \sum_{\textrm{non-empty $\J \subseteq \I$}}
    \sum_{\bm{\eta}} \fr{|\cp(\Q''_\J(\bm{\eta}))|}{\lambda^{2\rho - |\J| - |\L|}
        \cdot 
        m^{|\J|}} \right) 
    = O(p)
     \nn 
}
where the equality used Lemma~\ref{lmm&cpthm2-weakver}, the fact that $\I$ has constant non-empty subsets, and that $\sum_{\bm{\eta}} \lambda^{|\L|} \le \lambda^{|\H|} \cdot \lambda^{|\L|} = \lambda^k \le p$. We can therefore adjust the constants in \eqref{eqn&alg-machine-num} to make sure that the total number of machines needed by all the configurations is at most $p$. 

\begin{lem} 
    $\Q''(\bm{\eta})$ can be answered in one round with load $O(m/p^{1/\rho})$ using $p''_\bm{\eta}$ machines, subject to a failure probability of at most $1/p^c$ where $c$ can be set to an arbitrarily large constant.
\end{lem}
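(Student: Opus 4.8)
\noindent\emph{Proof idea.}
The plan is to compute the two factors of the identity $\join(\Q''(\bm{\eta})) = \cp(\Q''_\iso(\bm{\eta})) \times \join(\Q''_{\mathit{light}}(\bm{\eta}))$ from \eqref{eqn&framework-Q'=Q''} on disjoint groups of machines and then glue them with Lemma~\ref{lmm&cpcomp}; note that $\Q''_{\mathit{light}}(\bm{\eta})$ is a query over $\L \setminus \I$ (the CCs of ``Group 1'') and $\Q''_\iso(\bm{\eta})$ is the cartesian product of the unary relations $R''_X(\bm{\eta})$, $X \in \I$ (the singleton CCs of ``Group 2''). We split the budget into $p^{(1)}_{\bm{\eta}} := \lambda^{|\L \setminus \I|}$ machines for the light factor and $p^{(2)}_{\bm{\eta}} := \Theta(p''_{\bm{\eta}} / \lambda^{|\L \setminus \I|})$ machines for the isolated factor, so that $p^{(1)}_{\bm{\eta}} \cdot p^{(2)}_{\bm{\eta}} = \Theta(p''_{\bm{\eta}})$, which is at most $p''_{\bm{\eta}}$ after rescaling the hidden constant in \eqref{eqn&alg-machine-num}. (When $\I = \emptyset$ only the light factor survives; when $\L \setminus \I = \emptyset$ only the isolated factor survives; when $\L = \emptyset$ the query is trivial. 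These cases are handled by dropping the corresponding step.)

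For the light factor, every value occurring in a relation $R''_e(\bm{\eta})$ of $\Q''_{\mathit{light}}(\bm{\eta})$ is a light value, hence occurs in at most $m/\lambda$ tuples of $R_e$ and therefore in at most $m/\lambda$ tuples of $R''_e(\bm{\eta})$; giving the share $\lambda$ to every attribute of $\L \setminus \I$ thus makes $\Q''_{\mathit{light}}(\bm{\eta})$ skew-free with $\min_{R} \share(R) = \lambda^{2}$, so Lemma~\ref{lmm&skewfree} answers it in one round on $p^{(1)}_{\bm{\eta}} = \lambda^{|\L\setminus\I|}$ machines with load $\tO(m/\lambda^{2}) = \tO(m/p^{1/\rho})$ (recall $\lambda = \Theta(p^{1/(2\rho)})$). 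Because a single non-isolated light vertex would force its light-edge neighbour to be non-isolated too, $|\L \setminus \I| \neq 1$; hence whenever the light factor is non-trivial $p^{(1)}_{\bm{\eta}} \ge \lambda^{2} = \Theta(p^{1/\rho})$, so the failure probability $(p^{(1)}_{\bm{\eta}})^{-c'}$ of Lemma~\ref{lmm&skewfree} drops below $1/p^{c}$ once $c'$ is chosen large enough (recall $\rho = O(1)$). For the isolated factor, Lemma~\ref{lmm&cpalg} computes $\cp(\Q''_\iso(\bm{\eta}))$ deterministically in one round on $p^{(2)}_{\bm{\eta}}$ machines with load $O(\max_{\emptyset \neq \J \subseteq \I} |\cp(\Q''_\J(\bm{\eta}))|^{1/|\J|} / (p^{(2)}_{\bm{\eta}})^{1/|\J|})$, where $\Q''_\J(\bm{\eta})$ is from \eqref{eqn&Q''_J} (the needed tuple ids come from the sorting in Step~2). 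It therefore suffices to verify that $p^{(2)}_{\bm{\eta}} = \Omega(|\cp(\Q''_\J(\bm{\eta}))| \cdot \lambda^{2|\J|} / m^{|\J|})$ for every non-empty $\J \subseteq \I$, since this forces every term of the maximum to be $O(m/\lambda^{2}) = O(m/p^{1/\rho})$.

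This last inequality is precisely what the exponents in \eqref{eqn&alg-machine-num} are built to deliver. Writing $p = \Theta(\lambda^{2\rho})$ and using $|\L| - |\L\setminus\I| = |\I|$, the $\J$-indexed summand of $p^{(2)}_{\bm{\eta}} = \Theta(p''_{\bm{\eta}}/\lambda^{|\L\setminus\I|})$ equals $\Theta(\lambda^{2\rho - |\L\setminus\I|} \cdot |\cp(\Q''_\J(\bm{\eta}))| / (\lambda^{2\rho-|\J|-|\L|}\, m^{|\J|})) = \Theta(\lambda^{|\J|+|\I|} \cdot |\cp(\Q''_\J(\bm{\eta}))| / m^{|\J|})$, and since $\J \subseteq \I$ gives $|\I| \ge |\J|$, this is $\Omega(\lambda^{2|\J|} \cdot |\cp(\Q''_\J(\bm{\eta}))| / m^{|\J|})$, as required; in particular $p^{(2)}_{\bm{\eta}} = \Omega(\lambda^{|\I|}) \ge 1$. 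Both factors have thus been produced in one round with load $\tO(m/p^{1/\rho})$ on $p^{(1)}_{\bm{\eta}}, p^{(2)}_{\bm{\eta}} \le p''_{\bm{\eta}} \le p$ machines; choosing constants $t_1, t_2$ with $(p^{(i)}_{\bm{\eta}})^{1/t_i} = \Theta(p^{1/\rho})$ (possible since $1 \le p^{(i)}_{\bm{\eta}} \le p$) puts these loads in the $\tO(m/(p^{(i)}_{\bm{\eta}})^{1/t_i})$ form demanded by Lemma~\ref{lmm&cpcomp}, which then produces $\join(\Q''_{\mathit{light}}(\bm{\eta})) \times \cp(\Q''_\iso(\bm{\eta})) = \join(\Q''(\bm{\eta}))$ in one round on $p^{(1)}_{\bm{\eta}} p^{(2)}_{\bm{\eta}} = \Theta(p''_{\bm{\eta}})$ machines with load $\tO(m/p^{1/\rho})$ and total failure probability at most $1/p^{c}$. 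I expect the main obstacle to be this exponent bookkeeping --- tying \eqref{eqn&alg-machine-num} to the load formula of Lemma~\ref{lmm&cpalg} through $p = \Theta(\lambda^{2\rho})$ and the inequality $|\I| \ge |\J|$, and correctly parametrizing Lemma~\ref{lmm&cpcomp} --- together with the nuisance of the several degenerate cases; the skew-freeness of the light factor is immediate.
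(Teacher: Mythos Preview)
Your proposal is correct and follows the same route as the paper: split the $p''_{\bm{\eta}}$ machines into $\Theta(\lambda^{|\L\setminus\I|})$ for $\Q''_{\mathit{light}}(\bm{\eta})$ (handled via Lemma~\ref{lmm&skewfree} with share $\lambda$ on every light attribute) and $\Theta(p''_{\bm{\eta}}/\lambda^{|\L\setminus\I|})$ for $\cp(\Q''_\iso(\bm{\eta}))$ (handled via Lemma~\ref{lmm&cpalg}), then stitch with Lemma~\ref{lmm&cpcomp}; your exponent bookkeeping (using $p=\Theta(\lambda^{2\rho})$ and $|\I|\ge|\J|$) is exactly the paper's. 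Your extra remarks---that $|\L\setminus\I|\ne 1$, and the explicit handling of the degenerate cases $\I=\emptyset$ or $\L\setminus\I=\emptyset$---go slightly beyond what the paper spells out but are sound.
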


\begin{proof}
    As shown in \eqref{eqn&framework-Q'=Q''}, $\join(\Q''(\bm{\eta}))$ is the cartesian product of $\cp(\Q''_\iso(\bm{\eta}))$ and $\join(\Q''_\mit{light}(\bm{\eta}))$. We deploy $\Theta(p''_{\bm{\eta}} / \lambda^{|\L| - |\I|})$ machines to compute $\cp(\Q''_\iso(\bm{\eta}))$ in one round. 
    By Lemma~\ref{lmm&cpalg}, the load is 
    \myeqn{
        \tO\left( 
            \fr{|\cp(\Q''_\J(\bm{\eta}))|^{1/|\J|}}
            {\left(\fr{p''_{\bm{\eta}}}{\lambda^{{|\L| - |\I|}}} \right)^{1/|\J|}}
        \right) \label{eqn&alg-oneinst-machine-num}
    }
    for some non-empty $\J \subseteq \I$. \eqref{eqn&alg-machine-num} guarantees that 
    \myeqn{
    p''_{\bm{\eta}}
    &=&
     \Omega\left(p \cdot \fr{|\cp(\Q''_\J(\bm{\eta}))|}{\lambda^{2\rho - |\J| - |\L|}
        \cdot 
        m^{|\J|}
        }
    \right)
    \nn
    }
    with which we can derive
    \myeqn{
        \eqref{eqn&alg-oneinst-machine-num}
        =
        \tO\left(
        \fr{m \cdot \lambda^{\fr{2 \rho - |\J| - |\I|}{|\J|} }}
        {p^{1/|\J|}}
        \right)
        =
         \tO\left(
        \fr{m \cdot \lambda^{\fr{2 \rho - 2|\J|}{|\J|} }}
        {p^{1/|\J|}}
        \right)
        =
        \tO\left(
        \fr{m \cdot p^{\fr{2\rho - 2|\J|}{2 \rho |\J|} }}
        {p^{1/|\J|}}
        \right)
        =
        \tO\left(
        \fr{m}
        {p^{1/\rho}}
        \right).
        \nn
    }

    Regarding $\Q''_\mit{light}(\bm{\eta})$, first note that $\attset(\Q''_\mit{light}(\bm{\eta})) = \L \setminus \I$. If $\L \setminus \I$ is empty, no $\Q''_\mit{light}(\bm{\eta})$ exists and $\join(\Q''(\bm{\eta})) = \cp(\Q''_\iso(\bm{\eta}))$. The subsequent discussion considers that $\L \setminus \I$ is not empty.         
    As the input relations of $\Q''_\mit{light}(\bm{\eta})$ contain only light values, $\Q''_\mit{light}(\bm{\eta})$ is skew-free if a share of $\lambda$ is assigned to each attribute in $\L \setminus \I$. By Lemma~\ref{lmm&skewfree}, $\join(\Q''_\mit{light}(\bm{\eta}))$ can be computed in one round with load $\tO(m/\lambda^2) = \tO(m/p^{1/\rho})$ using $\Theta(\lambda^{|\L \setminus \I|})$ machines, subject to a certain failure probability $\delta$. As $\lambda^{|\L \setminus \I|} \ge \lambda$ which is a polynomial of $p$, Lemma~\ref{lmm&skewfree} allows us to make sure $\delta \le 1/p^c$ for any constant $c$. 
    
    \vgap 
    
    By combining the above discussion with Lemma~\ref{lmm&cpcomp}, we conclude that $\join(\Q''(\bm{\eta}))$ can be computed in one round with load $\tO(m/p^{1/\rho})$ using $p''_{\bm{\eta}}$ machines, subject to a failure probability at most $\delta \le 1/p^c$.  
\end{proof}

Overall, the load of our algorithm is $\tO(p^{1/\rho} + p^2 + m/p^{1/\rho})$. This brings us to our second main result: 

\begin{thm} \label{thm&alg-main}
    Given a simple binary join query with input size $m \ge p^3$ and a fractional edge covering number $\rho$, we can answer it in the MPC model using $p$ machines in constant rounds with load $\tO(m/p^{1/\rho})$, subject to a failure probability of at most $1/p^c$ where $c$ can be set to an arbitrarily large constant.
\end{thm}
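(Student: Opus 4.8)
The plan is to assemble Theorem~\ref{thm&alg-main} directly from the three-step framework developed in Section~\ref{sec&alg}, using \eqref{eqn&taxonomy-goal} to reduce the whole query to the per-$\H$ subproblem \eqref{eqn&framework-goal}. First I would recall that $\attset(\Q)$ has $2^k = O(1)$ subsets $\H$; since processing them in parallel multiplies the load by only a constant factor, it suffices to produce \eqref{eqn&framework-goal} for a fixed $\H$ with load $\tO(m/p^{1/\rho})$. Fixing $\lambda = \Theta(p^{1/(2\rho)})$, the number of configurations is $O(\lambda^k) = O(p^{k/(2\rho)}) = O(p)$, where the last bound uses $\rho \ge k/2$ from the first bullet of Lemma~\ref{lmm&coverpack}; this $O(p)$ bound is what makes the preprocessing (distributing the histogram and broadcasting the $|\cp(\Q''_\iso(\bm{\eta}))|$ counts) cost only $\tO(p^{1/\rho}+m/p)$ and $O(p^2)$ load respectively, both absorbed in $\tO(m/p^{1/\rho})$ once $p^3 \le m$.

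Next I would carry out the three steps in order. Step~1: by Lemma~\ref{lmm&taxonomy-inputsize}, $\sum_{\bm{\eta}} m_{\bm{\eta}} \le m\lambda^{k-2}$, so allocating $p'_{\bm{\eta}} = \lceil p \cdot m_{\bm{\eta}}/\Theta(m\lambda^{k-2})\rceil$ machines keeps $\sum_{\bm{\eta}} p'_{\bm{\eta}} \le p$ and gives each machine $O(m\lambda^{k-2}/p) = O(m/p^{1/\rho})$ tuples, again by $\rho \ge k/2$; a random-routing argument delivers the residual relations in one round with load $\tO(m/p^{1/\rho})$ with high probability. Step~2: the semi-join reduction turning $\Q'(\bm{\eta})$ into $\Q''(\bm{\eta})$ is a sorting-based routine costing $O(1)$ rounds and load $\tO(m_{\bm{\eta}}/p'_{\bm{\eta}}) = \tO(m/p^{1/\rho})$. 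Step~3: allocate $p''_{\bm{\eta}}$ machines as in \eqref{eqn&alg-machine-num}; the key check is $\sum_{\bm{\eta}} p''_{\bm{\eta}} = O(p)$, which follows from Lemma~\ref{lmm&cpthm2-weakver} applied to each non-empty $\J \subseteq \I$ (a constant number of them) together with $\sum_{\bm{\eta}} \lambda^{|\L|} \le \lambda^k \le p$. The per-configuration lemma then evaluates $\Q''(\bm{\eta})$ in one round with load $\tO(m/p^{1/\rho})$: $\cp(\Q''_\iso(\bm{\eta}))$ is handled by Lemma~\ref{lmm&cpalg} (the exponent arithmetic using $\lambda = \Theta(p^{1/(2\rho)})$ collapses to $\tO(m/p^{1/\rho})$), $\join(\Q''_\mit{light}(\bm{\eta}))$ is skew-free with shares $\lambda$ and handled by Lemma~\ref{lmm&skewfree} with load $\tO(m/\lambda^2) = \tO(m/p^{1/\rho})$, and the two are combined by Lemma~\ref{lmm&cpcomp} (this requires the side component counts to be broadcast first, which is why Step~3 opens with that round of $O(p^2)$ load).

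Finally I would total the loads across rounds: preprocessing $\tO(p^{1/\rho}+m/p)$, the $O(p^2)$ broadcast round, and $\tO(m/p^{1/\rho})$ for the three main steps; under $m \ge p^3$ all terms are $\tO(m/p^{1/\rho})$, and the constant number of rounds means the algorithm's load is asymptotically the max round load. The failure probability is a union bound over the constantly many $\H$ and the $O(p)$ configurations of the $1/p^{c'}$-type failures in Steps~1 and~3, which can be driven below $1/p^c$ for any constant $c$ by inflating the constants.

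The main obstacle, and the place I would spend the most care, is verifying $\sum_{\bm{\eta}} p''_{\bm{\eta}} = O(p)$: this is exactly where the isolated cartesian product theorem earns its keep, because the naive bound $|\cp(\Q''_\iso(\bm{\eta}))| \le m^{|\I|}$ per configuration is off by a polynomial factor and would blow the machine budget. One must ensure the allocation \eqref{eqn&alg-machine-num} is simultaneously large enough that Lemma~\ref{lmm&cpalg}'s load on every relevant $\J$-product is $\tO(m/p^{1/\rho})$ — this needs the lower bound $p''_{\bm{\eta}} = \Omega(p\,|\cp(\Q''_\J(\bm{\eta}))| / (\lambda^{2\rho-|\J|-|\L|} m^{|\J|}))$ for \emph{every} non-empty $\J$, not just $\J = \I$ — and small enough in aggregate, which is precisely the $\J$-indexed form of Lemma~\ref{lmm&cpthm2-weakver}. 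Everything else is a routine bookkeeping of exponents in $\lambda = \Theta(p^{1/(2\rho)})$ and an application of $\rho \ge k/2$.
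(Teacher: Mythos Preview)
Your proposal is correct and follows essentially the same route as the paper: the argument in Section~\ref{sec&alg} is exactly the three-step assembly you describe, with the crucial machine-budget check $\sum_{\bm{\eta}} p''_{\bm{\eta}} = O(p)$ discharged by Lemma~\ref{lmm&cpthm2-weakver} (the per-$\J$ form of the isolated cartesian product theorem), and the final load $\tO(p^{1/\rho} + p^2 + m/p^{1/\rho})$ simplified to $\tO(m/p^{1/\rho})$ under $m \ge p^3$. Your identification of the main obstacle --- that the naive $m^{|\I|}$ bound per configuration would overrun the machine budget, forcing the $\J$-indexed allocation in \eqref{eqn&alg-machine-num} --- matches the paper's emphasis precisely.
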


\section{Concluding Remarks} \label{sec&conclusion} 

This paper has introduced an algorithm for computing a natural join over binary relations under the MPC model. Our algorithm performs a constant number of rounds and incurs a load of $\tO(m/p^{1/\rho})$ where $m$ is the total size of the input relations, $p$ is the number of machines, and $\rho$ is the fractional edge covering number of the query. The load matches a known lower bound up to a polylogarithmic factor. Our techniques heavily rely on a new finding, which we refer to as the {\em isolated cartesian product theorem}, on the join problem's mathematical structure.

\vgap 

We conclude the paper with two remarks: 

\begin{itemize}
    \item The assumption $p^3 \le m$ can be relaxed to $p \le m^{1-\eps}$ for an arbitrarily small constant $\eps> 0$. Recall that our algorithm incurs a load of $\tO(p^{1/\rho} + p^2 + m/p^{1/\rho})$ where the terms $\tO(p^{1/\rho})$ and $\tO(p^2)$ are both due to the computation of statistics (in preprocessing and Step 2, respectively). In turn, these statistics are needed to allocate machines for subproblems. By using the machine-allocation techniques in \cite{hyt19}, we can avoid most of the statistics communication and reduce the load to $\tO(p^\eps + m/p^{1/\rho})$. 
    

    \item In the external memory (EM) model \cite{av88}, we have a machine equipped with $M$ words of internal memory and an unbounded disk that has been formatted into {\em blocks} of size $B$ words. An {I/O} either reads a block of $B$ words from the disk to the memory, or overwrites a block with $B$ words in the memory. A join query $\Q$ is considered solved if every tuple $\bm{u} \in \Q$ has been generated in memory at least once. The challenge is to design an algorithm to achieve the purpose with as few I/Os as possible. There exists a reduction \cite{kbs16} that can be used to convert an MPC algorithm to an EM counterpart. Applying the reduction on our algorithm gives an EM algorithm that solves $\Q$ with $\tO(\fr{m^\rho}{B \cdot M^{\rho-1}})$ I/Os, provided that $M \ge m^c$ for some positive constant $c < 1$ that depends on $\Q$. The I/O complexity can be shown to be optimal up to a polylogarithmic factor using the lower-bound arguments in \cite{hqt16,ps14}. We suspect that the constraint $M \ge m^c$ can be removed by adapting the isolated cartesian product theorem to the EM model. 
\end{itemize}


\bibliographystyle{alphaurl}
\bibliography{ref}

\end{document}